\documentclass[sigconf, nonacm]{acmart}
\newcommand\vldbdoi{XX.XX/XXX.XX}
\newcommand\vldbpages{XXX-XXX}
\newcommand\vldbvolume{XX}
\newcommand\vldbissue{X}
\newcommand\vldbyear{XXXX}
\newcommand\vldbauthors{\authors}
\newcommand\vldbtitle{\shorttitle}
\newcommand\vldbavailabilityurl{https://github.com/wizicer/SettleFL}
\newcommand\vldbpagestyle{plain}

\usepackage{listings}

\usepackage{amsthm}
\newtheorem{definition}{Definition}
\newtheorem{theorem}{Theorem}

\usepackage{array}
\usepackage{pifont}
\usepackage{xspace}
\usepackage{xparse}

\usepackage[ruled,vlined,linesnumbered]{algorithm2e}
\usepackage{multirow}
\usepackage{subcaption}

\usepackage{bm}

\makeatletter
\renewcommand\paragraph{\def\@toclevel{4}
    \@startsection{paragraph}{4}{0pt}
    {-.5\baselineskip \@plus -2\p@ \@minus -.2\p@}
    {-3.5\p@}
{\ACM@NRadjust{\@parfont\bfseries}}}
\makeatother

\newcommand{\sys}{\textsf{\kern0.5pt SettleFL}\xspace}

\definecolor{lightgray}{gray}{0.80}

\NewDocumentCommand{\bcfl}{}{\ensuremath{\texttt{BCFL}}\xspace}
\NewDocumentCommand{\piccfl}{}{\ensuremath{\textsf{ccFL}}\xspace}
\NewDocumentCommand{\ccfl}{}{\ensuremath{\textsf{SettleFL-CC-Sol}}\xspace}
\NewDocumentCommand{\zkccfl}{}{\ensuremath{\textsf{SettleFL-CC}}\xspace}
\NewDocumentCommand{\zkrfl}{}{\ensuremath{\textsf{SettleFL-CP}}\xspace}

\NewDocumentCommand{\agg}{}{\ensuremath{\mathcal{A}}\xspace}
\NewDocumentCommand{\pars}{}{\ensuremath{\mathcal{P}}\xspace}

\NewDocumentCommand{\opjobcreation}{}{\ensuremath{\operatorname{create}}\xspace}
\NewDocumentCommand{\opcommit}{}{\ensuremath{\operatorname{commit}}\xspace}
\NewDocumentCommand{\optransition}{}{\ensuremath{\operatorname{transition}}\xspace}
\NewDocumentCommand{\opfinalize}{}{\ensuremath{\operatorname{finalize}}\xspace}
\NewDocumentCommand{\opdistribute}{}{\ensuremath{\operatorname{distribute}}\xspace}
\NewDocumentCommand{\opdistributeo}{}{\ensuremath{\operatorname{distribute-1s}}\xspace}
\NewDocumentCommand{\opdistributem}{}{\ensuremath{\operatorname{distribute-ms}}\xspace}

\NewDocumentCommand{\opchallenge}{}{\ensuremath{\operatorname{challenge}}\xspace}
\NewDocumentCommand{\opcounter}{}{\ensuremath{\operatorname{counter}}\xspace}

\newcommand{\stepscale}{1.35}
\newcommand{\stepsize}{\dimexpr \stepscale\ht\strutbox\relax}

\newcommand{\step}[2][1]{
    \raisebox{-0.8ex}{
        \includegraphics[height=\dimexpr #1\stepsize\relax]{step-#2.pdf}
    }
}

\SetCommentSty{MyCmtStyle}
\SetKwComment{Comment}{$\textcolor{gray}{\triangleright}$ }{}
\SetKwComment{CommentDown}{$\textcolor{gray}{\triangledown}$ }{}
\DeclareRobustCommand{\CommentFunction}[1]{
    \unskip\hfill{\textcolor{gray}{\itshape$\triangleright$~#1}}
}

\newcommand{\hagg}{\mathsf{h}_{\agg}}
\newcommand{\salt}{\rho}

\newcommand{\lastround}{r_{\text{last}}}
\newcommand{\onlyOwner}{\textit{onlyOwner}}
\newcommand{\unlockTime}{t_{\mathsf{unlock}}}
\newcommand{\status}{\tau}
\newcommand{\statusCommitted}{\tau_\textsf{Committed}}
\newcommand{\statusRewardInit}{\tau_\textsf{RewardInit}}
\newcommand{\statusChallenged}{\tau_\textsf{Challenged}}
\newcommand{\statusPaid}{\tau_\textsf{Distributed}}
\newcommand{\statusPaying}{\tau_\textsf{Distributing}}
\newcommand{\verify}[1]{\textit{verify}_{\mathsf{#1}}}
\newcommand{\proofpi}[1]{\pi_{\mathsf{#1}}}
\newcommand{\mode}{\gamma}
\newcommand{\modeProof}{\gamma_\textsf{Proof}}
\newcommand{\modeChallenge}{\gamma_\textsf{Challenge}}

\newcommand{\pubpar}[1]{\underline{#1}}

\newcommand{\pubkeyA}{\mathbf{A}}
\newcommand{\gadgetname}[1]{\textsc{#1}}

\newcommand{\constrain}[1]{\ensuremath{\langle #1 \rangle}}

\begin{document}
\title{SettleFL: Trustless and Scalable Reward Settlement Protocol for Federated Learning on Permissionless Blockchains}

\author{Shuang Liang}
\affiliation{
    \institution{Shanghai Jiao Tong University}
}
\email{liangshuangde@sjtu.edu.cn}

\author{Yang Hua}
\affiliation{
    \institution{Queen's University Belfast}
}
\email{Y.Hua@qub.ac.uk}

\author{Linshan Jiang}
\affiliation{
    \institution{National University of Singapore}
}
\email{linshan@nus.edu.sg}

\author{Peishen Yan}
\affiliation{
    \institution{Shanghai Jiao Tong University}
}
\email{peishenyan@sjtu.edu.cn}

\author{Tao Song}
\authornote{Corresponding author.}
\affiliation{
    \institution{Shanghai Jiao Tong University}
}
\email{songt333@sjtu.edu.cn}

\author{Bin Yao}
\authornotemark[1]
\affiliation{
    \institution{Shanghai Jiao Tong University}
}
\email{yaobin@cs.sjtu.edu.cn}

\author{Haibing Guan}
\affiliation{
    \institution{Shanghai Jiao Tong University}
}
\email{hbguan@sjtu.edu.cn}

\renewcommand{\shortauthors}{Liang \etal}

\begin{abstract}
    In open Federated Learning (FL) environments where no central authority exists, ensuring collaboration fairness relies on decentralized reward settlement, yet the prohibitive cost of permissionless blockchains directly clashes with the high-frequency, iterative nature of model training.
    Existing solutions either compromise decentralization or suffer from scalability bottlenecks due to linear on-chain costs.
    To address this, we present \sys{}, a trustless and scalable reward settlement protocol designed to minimize total economic friction by offering a family of two interoperable protocols.
    Leveraging a shared domain-specific circuit architecture, \sys{} offers two interoperable strategies:
    (1) a \emph{Commit-and-Challenge} variant that minimizes on-chain costs via optimistic execution and dispute-driven arbitration, and
    (2) a \emph{Commit-with-Proof} variant that guarantees instant finality through per-round validity proofs.
    This design allows the protocol to flexibly adapt to varying latency and cost constraints while enforcing rational robustness without trusted coordination.
    We conduct extensive experiments combining real FL workloads and controlled simulations.
    Results show that \sys{} remains practical when scaling to 800 participants, achieving substantially lower gas cost.
\end{abstract}

\maketitle
\pagestyle{\vldbpagestyle}
\begingroup\small\noindent\raggedright\textbf{PVLDB Reference Format:}\\
\vldbauthors. \vldbtitle. PVLDB, \vldbvolume(\vldbissue): \vldbpages, \vldbyear.\\
\href{https://doi.org/\vldbdoi}{doi:\vldbdoi}
\endgroup
\begingroup
\renewcommand\thefootnote{}\footnote{\noindent
    This work is licensed under the Creative Commons BY-NC-ND 4.0 International License. Visit \url{https://creativecommons.org/licenses/by-nc-nd/4.0/} to view a copy of this license. For any use beyond those covered by this license, obtain permission by emailing \href{mailto:info@vldb.org}{info@vldb.org}. Copyright is held by the owner/author(s). Publication rights licensed to the VLDB Endowment. \\
    \raggedright Proceedings of the VLDB Endowment, Vol. \vldbvolume, No. \vldbissue\
    ISSN 2150-8097. \\
    \href{https://doi.org/\vldbdoi}{doi:\vldbdoi} \\
}\addtocounter{footnote}{-1}\endgroup

\ifdefempty{\vldbavailabilityurl}{}{
    \vspace{.3cm}
    \begingroup\small\noindent\raggedright\textbf{PVLDB Artifact Availability:}\\
    The source code, data, and/or other artifacts have been made available at \url{\vldbavailabilityurl}.
    \endgroup
}

\section{Introduction}
\label{sec:introduction}
\begin{figure*}[t]
    \centering
    \includegraphics[width=\textwidth]{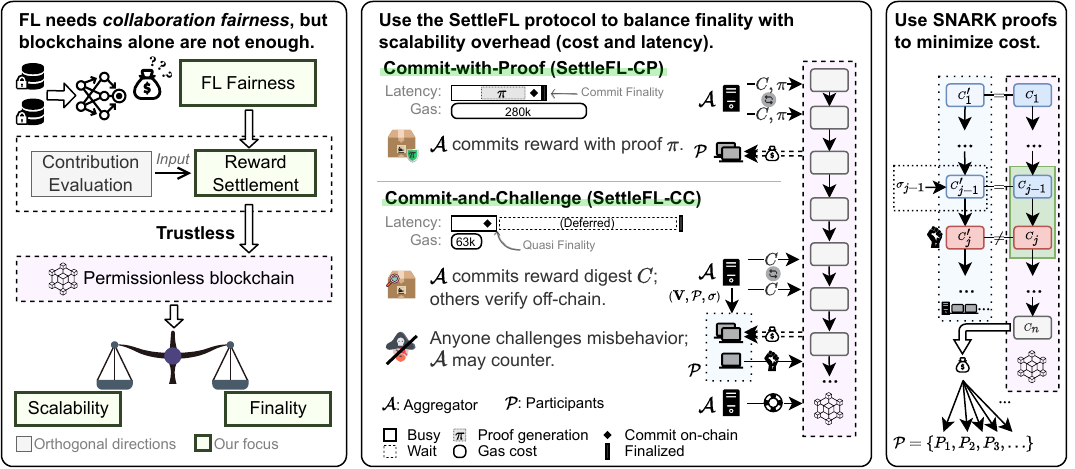}
    \caption{Overview of \sys{} protocol family.
        \textnormal{Our protocol resolves the conflict between costly on-chain execution and high-frequency FL updates.
            It navigates the scalability-finality trade-off through two variants: \emph{Commit-and-Challenge} for minimal cost, and \emph{Commit-with-Proof} for instant finality.
            A shared SNARK architecture flexibly enforces correctness by either guaranteeing validity upfront or resolving disputes with minimal on-chain overhead.
        }
    }
    \Description{
        The figure illustrates the design and components of the SettleFL protocol across three panels:

        **Left Panel: Motivation & Context**
        * **Problem Domain:** Focuses on "FL Fairness", explicitly treating "Contribution Evaluation" as an "Input" to "Reward Settlement" (the core focus).
        * **Key Insight:** States that "FL needs *collaboration fairness*, but blockchains alone are not enough."
        * **Metrics:** Highlights the trade-off between "Scalability" and "Finality" in a Trustless environment.
        * **Legend:** Distinguishes between "Orthogonal directions" (grey) and "Our focus" (green/highlighted).

        **Middle Panel: Protocol Strategies (SettleFL-CP vs SettleFL-CC)**
        * **Commit-with-Proof (SettleFL-CP):**
        * **Mechanism:** Aggregator commits reward with proof \pi.
        * **Characteristics:** Commit Finality, but higher cost and latency.
        * **Commit-and-Challenge (SettleFL-CC):**
        * **Mechanism:** Aggregator commits reward digest C; others verify off-chain.
        * **Interactive Process:** Anyone challenges misbehavior; Aggregator may counter.
        * **Characteristics:** Quasi Finality with low cost/latency, but depends on an interactive dispute resolution period.
        * **SettleFL Goal:** Use the SettleFL protocol to balance finality with scalability overhead (cost and latency).

        **Right Panel: Challenge Mechanism Detail**
        * **Objective:** "Use SNARK proofs to minimize cost" during disputes.
        * **Process:** Depicts a step-by-step state comparison between the Prover (Aggregator) and a Challenger.
        * **Divergence Check:**
        * Shows a sequence of states $C$ and $C'$.
        * Identifies the specific step $j$ where the states diverge: matching at $j-1$ ($C'_{j-1} = C_{j-1}$) but differing at $j$ ($C'_j \neq C_j$).
        * **Outcome:** This mechanism allows the protocol to enforce correctness with minimal on-chain computation.
    }

    \label{fig:overview}
    \vspace{5pt}
\end{figure*}

Federated Learning (FL) enables collaborative model training across distributed participants, yet maintaining long-term participation requires ensuring \emph{collaboration fairness}~\cite{huangFederated2024}.
To achieve this, the system is expected to address two distinct challenges: \emph{contribution evaluation} and \emph{reward settlement}.
Contribution evaluation refers to the algorithmic process of quantifying the value of participant updates, often utilizing methods such as Shapley value estimation~\cite{wangPrincipled2020,chenContributions2024}.
Advanced approaches in this domain may further incorporate contribution verifiability mechanisms to cryptographically prove the correctness of these evaluation results~\cite{xingZeroKnowledge2026}.
However, existing works either treat the execution of rewards as a trivial external assumption~\cite{wangPrincipled2020} or restrict rewards to coarse-grained model utility~\cite{lyuCollaborative2020}.
In open FL environments where no central authority exists, this assumption collapses.
Without a credible enforcement mechanism, even the most accurate and verifiable evaluation algorithms become meaningless promises.
Consequently, this work explicitly excludes contribution evaluation from its scope, treating it as an upstream input, and focuses exclusively on the unsolved challenge of \emph{reward settlement}, aiming to ensure that decided rewards are settled strictly and immutably without a trusted coordinator.

Enforcing such settlement without trusted intermediaries necessitates blockchain technology~\cite{nakamotoBitcoin2008}.
However, the \emph{blockchain trilemma}~\cite{renoNavigating2025} dictates that maximizing decentralization and security inherently compromises scalability.
Since open FL strictly requires a trustless environment, permissionless blockchains emerge as the theoretically optimal infrastructure to maximize censorship resistance and universal verifiability.
However, this architectural choice creates a fundamental conflict: the prohibitive cost of on-chain execution directly clashes with the high-frequency, iterative nature of FL~\cite{liuEnhancing2024}.

Existing solutions fail to resolve this tension.
Approaches utilizing permissioned or hybrid architectures~\cite{desaiBlockFLA2021,fanHybrid2021} reduce costs but reintroduce trust assumptions via closed consortia or cross-chain bridges, undermining the censorship resistance required for open collaboration.
Conversely, state channels~\cite{iacobanStateFL2024} suffer from fragmented dispute resolution: their isolated 1-to-1 topology prevents collective security, forcing every participant to individually initiate challenges to secure their funds.
Ultimately, these methods compel participants to \emph{actively} submit transactions for settlement, causing gas costs to scale linearly with the user base.
The core research problem is therefore: \emph{How to design a protocol that enforces reward settlement on permissionless blockchains while minimizing on-chain costs at scale?}

To address this, we present \sys{}, a trustless and scalable reward settlement protocol as illustrated in Figure~\ref{fig:overview}.
Designed to bridge the gap between algorithmic fairness and system practicality, \sys{} creates a flexible design space by introducing two interoperable variants within a single protocol family.
The \emph{Commit-and-Challenge} variant minimizes costs by adopting an optimistic execution model where the aggregator commits only a cryptographic digest of the reward allocation on-chain.
This design allows honest participants to remain passive; they verify data off-chain and are never required to submit transactions unless they detect malpractice.
Conversely, for environments requiring immediate guarantees, the \emph{Commit-with-Proof} variant attaches a validity proof, specifically a Succinct Non-interactive ARgument of Knowledge (SNARK) proof, to every commitment, offering instant finality.

While drawing architectural inspiration from blockchain scaling primitives~\cite{sguanciLayer2021}, \sys{} fundamentally differs by trading generality for domain-specific efficiency.
Unlike general-purpose optimistic rollups that often rely on complex, multi-round interactive dispute games, \sys{} leverages the structured nature of FL rewards to enable a single-shot non-interactive challenge, drastically reducing dispute complexity.
Similarly, unlike standard validity rollups that batch heterogeneous transactions arbitrarily, \sys{} aligns verification strictly with FL training rounds.
This design enables fine-grained per-round settlement without waiting for external block-level aggregation, ensuring that the incentive mechanism matches the iterative rhythm of FL.

Our main contributions are as follows:

\begin{itemize}
    \item
        We propose a protocol family, \sys{}, that addresses the scalability-finality trade-off in blockchain-based FL.
        It provides two interoperable variants, namely an optimistic Commit-and-Challenge variant for minimal cost and a Commit-with-Proof variant for instant finality, allowing applications to select the appropriate security model.

    \item
        We design a highly optimized SNARK circuit architecture that serves as the foundation for our modular verification mechanism.
        By employing shared cryptographic gadgets and a specifically designed commitment scheme, this infrastructure efficiently supports both immediate validity proofs and optimistic challenges, enabling the protocol to scale to large participant sets with constant on-chain complexity.

    \item
        We demonstrate the system's practicality through extensive experiments and a live deployment on the Ethereum Sepolia testnet.
        Results show that \sys{} successfully scales to 800 participants, a scale previously considered impractical for on-chain execution.
        Notably, it settles rewards for 50 training rounds at a total cost of approximately \$4 (USD), effectively removing the economic barrier for decentralized FL incentives.
\end{itemize}

\section{Preliminary}
\label{sec:preliminary}
\begin{figure}[t]
    \centering
    \includegraphics[width=0.4\textwidth]{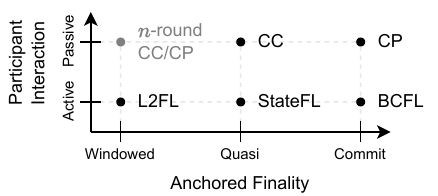}
    \caption{
        Landscape of Settlement Strategies.
        \textnormal{Unlike the active baselines L2FL~\cite{aliefFLB22023}, StateFL~\cite{iacobanStateFL2024}, and BCFL~\cite{xuBAFL2021} which require user participation, \sys{} enables passive interaction via its two variants, \zkccfl{} (CC) and \zkrfl{} (CP).
        Notably, if anchored finality is compromised to the windowed stage, the protocol essentially degrades to an $n$-round batched CC or CP.}
    }

    \Description{
        +-------------+------------------+------------------+------------------+
        | Interaction |     Windowed     |      Quasi       |      Commit      |
        +-------------+------------------+------------------+------------------+
        |   Active    |       L2FL       |      StateFL     |       BCFL       |
        |      |      |        |         |         |        |         |        |
        |      v      |        v         |         v        |         v        |
        |   Passive   |  n-round CC/CP   |        CC        |        CP        |
        +-------------+------------------+------------------+------------------+
        (Anchored Finality Axis -->)
    }
    \label{fig:plot_finality_interaction}
\end{figure}

\subsection{Blockchain-based Federated Learning}
\label{subsec:bcfl}

Blockchain-based Federated Learning (BCFL) integrates blockchain infrastructure into FL workflows to ensure transparency, fairness, and traceability in decentralized training environments.
While BCFL encompasses various mechanisms to enhance collaboration fairness, such as contribution evaluation or verifiable computation, our work specifically focuses on \emph{reward settlement}, utilizing the blockchain as an immutable \emph{incentive layer}.
In this context, smart contracts \emph{enforce} and \emph{automate} the allocation process, ensuring that rewards are distributed transparently according to predefined rules.
However, compared to centralized approaches, a secure permissionless blockchain implementation, referred to as \emph{Layer-1 (L1)}, inevitably introduces significant performance degradation due to the consensus overhead~\cite{hanHow2022}.

\paragraph{Anchored Finality.}
To mitigate L1 performance bottlenecks, existing BCFL frameworks attempt to bypass these limitations through two main approaches.
The first relies on \emph{alternative chains} with reduced overhead, including permissioned blockchains~\cite{xuBAFL2021, wengDeepChain2021}, sidechains~\cite{iacobanStateFL2024, chenFLock2025}, and lightweight independent chains~\cite{aliefFLB22023, difAutoDFL2025}.
While efficient, these architectures typically compromise security by depending on weaker consensus models.
The second approach utilizes \emph{Layer-2 (L2)}~\cite{gangwalSurvey2023} solutions, which we denote as L2FL~\cite{aliefFLB22023}.
L2FL maintains L1 security by enforcing \emph{anchored finality}, where a reward settlement is considered final only after its corresponding state commitment is successfully recorded on the main L1 ledger.
Critically, waiting for this cross-layer synchronization to complete imposes an unavoidable time delay, preventing instant reward confirmation.

\paragraph{Interaction.}
Regardless of the chosen architecture, the efficiency of the system is constrained by the frequency of on-chain operations, defined here as \emph{interaction}.
Distinct from standard communication protocols, every on-chain interaction inherently incurs a direct financial cost (i.e., gas) and requires waiting for settlement latency.
This overhead scales poorly in FL settings, which typically involve a single aggregator but a vast number of participants.
Therefore, as positioned in Figure~\ref{fig:plot_finality_interaction}, a critical objective in BCFL design is to minimize \emph{participant interaction}~\cite{iacobanStateFL2024, wahrstatterOpenFL2024}.
By enforcing a passive model, \sys{} eliminates user-side gas costs and maximizes scalability, accepting a bounded risk of at most one lost reward.

\paragraph{Rollup Mechanisms.}
Among L2 architectures, \emph{rollups} represent a dominant subclass distinguished by their requirement to publish transaction data or state commitments directly to the L1, ensuring data availability.
Mainstream rollups employ two distinct verification paradigms.
Optimistic rollups operate on a presumption of validity, posting state commitments to L1 subject to a fraud-proof challenge window.
This design minimizes immediate computation but necessitates a prolonged dispute period, often lasting days, to ensure security.
Conversely, validity rollups enforce correctness via succinct cryptographic proofs (e.g., SNARKs) that are verified by an L1 smart contract.
While this approach enables faster settlement compared to optimistic designs, it introduces substantial computational overhead for proof generation.

\subsection{SNARKs}

SNARKs are cryptographic proof systems that allow a prover to convince a verifier of the validity of a statement in a non-interactive and succinct way.
It is defined by the tuple $\textsf{SNARK} = (\mathsf{Setup}, \mathsf{Prove},\break \mathsf{Verify})$.
The $\mathsf{Setup}(1^\lambda, \mathcal{C})$ algorithm outputs public parameters $\mathsf{pp}$ for circuit $\mathcal{C}$.
The $\mathsf{Prove}(\mathsf{pp},x,w)\to \pi$ algorithm enables the prover to generate a succinct proof $\pi$ that $\mathcal{C}(x,w)=1$.
The $\mathsf{Verify}(\mathsf{pp},x,\pi)\break\to\{\mathsf{accept},\mathsf{reject}\}$ algorithm allows the verifier to check the validity of $\pi$ on input $x$.
A SNARK satisfies \emph{completeness} (valid proofs are always accepted) and \emph{soundness} (false statements cannot be proven).
In what follows, we instantiate our proofs using Groth16~\cite{grothSize2016}, a SNARK that is well supported on blockchains such as Ethereum.

\medskip
\noindent\textbf{Arithmetic circuits.}
In SNARKs, a statement is represented as an arithmetic circuit expressed by algebraic constraints, and the prover must show that the witness satisfies all constraints with respect to the public input.
The efficiency of proof generation depends on the number of constraints, which can be greatly reduced when the underlying operations are SNARK-friendly, i.e., encodable with few algebraic constraints.

\section{System Overview}
\label{sec:system}

\subsection{System Model}
\label{sec:system_model}

We consider an FL setting involving an aggregator \agg and a set of participants $\mathcal{P} = \{P_1, \ldots, P_N\}$, who collaboratively train a global machine learning model.
In each round $r$, the aggregator distributes the current global model to all participants, each of whom trains it locally on private data and returns an update to \agg.
The aggregator then aggregates these updates to produce a new global model and evaluates the contribution of each participant.
Based on this evaluation, a reward allocation vector $\mathbf{v}^r = (v_1^r, \ldots, v_N^r)$ is derived, where $v_i^r$ denotes the reward associated with participant $P_i$ in round $r$.

To ensure accountability and enable verifiable reward settlement, the system incorporates a permissionless blockchain $\mathbb{B}$ that serves as a public ledger for recording reward commitments and related state transitions.
The aggregator interacts with $\mathbb{B}$ to publish these commitments, while participants can reference on-chain records to verify or challenge the reported results.

\subsection{Threat Model}
\label{sec:threat}

We consider an adversary $\agg^\ast$ who controls the aggregator and may arbitrarily deviate from the protocol to manipulate reward settlement.
$\agg^\ast$ can observe all messages and submit inconsistent or fraudulent commitments, but is assumed unable to break cryptographic primitives or compromise the integrity of the blockchain.
Participants may also behave maliciously, for example, by submitting incorrect updates or issuing dishonest challenges.
The blockchain $\mathbb{B}$ is assumed to provide consensus security and eventual consistency, ensuring that once a commitment is on-chain, it cannot be reverted or altered.
Participants are assumed rational and choose actions that maximize their own reward given the protocol's payoffs; they tolerate at most one missing reward and, upon detecting aggregator misbehavior, exit the training, securing rewards already recorded on chain and avoiding further loss.

We focus on four representative threats.
(1) \textbf{Commitment reversal}: $\agg^\ast$ may locally announce a correct commitment and then publish a fraudulent one excluding certain contributions.
(2) \textbf{Refusal to commit}: $\agg^\ast$ withholds the final commitment on-chain, blocking participants from receiving rewards.
(3) \textbf{Reward withholding}: $\agg^\ast$ commits the reward but deliberately withholds the final transaction that distributes funds.
(4) \textbf{Stale challenges}: malicious $\pars^\ast$ may challenge a stale commitment in order to disrupt the system or extract undeserved rewards.\footnote{
    Our model focuses on reward enforceability in blockchain-based FL.
    We do not consider orthogonal concerns such as participant-level privacy leakage, model misuse or theft, delegated model training with misreported contributions~\cite{hallajiDecentralized2024}, or aggregator-participant collusion~\cite{hossainAdRoFL2025}.
}

\newcommand{\wgas}{w_g}
\newcommand{\wproto}{w_p}
\newcommand{\wcomp}{w_c}

\newcommand{\Cgas}{C_g}
\newcommand{\Tproto}{T_p}
\newcommand{\Tcomp}{T_c}

\newcommand{\Cbase}{C_b}
\newcommand{\Cverif}{C_v}
\newcommand{\Twin}{\Delta_w}
\newcommand{\Tproof}{\Delta_p}

\newcommand{\Pival}{\Pi_{\text{val}}}
\newcommand{\Piopt}{\Pi_{\text{opt}}}

\subsection{Problem Formulation}
\label{sec:problem_formulation}

We aim to design a protocol $\Pi$ that ensures trustless and scalable reward settlement under the threat model described in Section~\ref{sec:threat}.
We formulate this as a stronger setting of \emph{Practical Scalable BCFL}, requiring the protocol to support $N \ge 200$ participants, operate within realistic gas limits, ensure robustness without trusted coordination, and enable participants to passively receive rewards.

To rigorously define the goal, we first establish the definition of BCFL reward enforceability.

\begin{definition}[BCFL]
    \label{def:bcfl}
    Consider an aggregator \agg, participants $\mathcal{P}$, and a blockchain $\mathbb{B}$.
    A protocol $\Pi$ is a BCFL protocol if, for every round $r$, it specifies a reward vector $\mathbf{v}^r$ and enforces that the payouts are deterministically settled on $\mathbb{B}$.
    We define such settlement as having \emph{anchored finality} when the allocations implied by $\mathbf{v}^r$ become uniquely determined and are guaranteed to be distributed on $\mathbb{B}$ for all eligible participants.
\end{definition}

To maximize system availability and alleviate the operational burden on participants, we enforce a strict constraint of \emph{passive interaction}, defined as allowing participants to receive rewards without initiating direct interactions with the blockchain.
Under this constraint, we formalize the efficiency objective as a minimization of the \emph{Total Economic Friction}.

\begin{definition}[Total Economic Friction]
    \label{def:friction}
    Let $\Pi$ denote a BCFL protocol that supports passive interaction.
    The optimization objective is to minimize the weighted sum of explicit financial costs, protocol-enforced delays, and computational latencies:
    \begin{equation}
        \label{eq:friction_function}
        \min_{\Pi \in \text{BCFL}} \mathcal{F}(\Pi) = \wgas \cdot \Cgas(\Pi) + \wproto \cdot \Tproto(\Pi) + \wcomp \cdot \Tcomp(\Pi),
    \end{equation}
    where:
    \begin{itemize}
        \item $\Cgas(\Pi)$ represents the on-chain execution cost.
        \item $\Tproto(\Pi)$ denotes the mandatory waiting period enforced by the protocol, such as a dispute window.
        \item $\Tcomp(\Pi)$ denotes the physical runtime required to generate validity proofs.
        \item $\wgas, \wproto, \wcomp$ are task-specific sensitivity weights.
    \end{itemize}
\end{definition}

Note that throughout this paper, \emph{settlement} refers to the cryptographic finalization of state $\mathbf{v}^r$, while \emph{distribution} denotes the subsequent execution of asset transfers.
\sys{} focuses on optimizing settlement as the essential prerequisite for distribution.

\subsection{Solution Overview}
\label{sec:solution_overview}

To minimize $\mathcal{F}$ defined in Equation~\ref{eq:friction_function}, we analyze the behavior of the cost function at the boundaries of the design space.
Since hybrid approaches incur verification costs without removing the need for a dispute window, they fail to minimize either the financial or temporal friction.
Therefore, the design space collapses into two distinct variants, each zeroing out a specific latency component.

\paragraph{Boundary Strategies.}
We define the two boundary strategies, $\Piopt$ and $\Pival$, based on which latency term they eliminate:

\begin{itemize}
    \item \textbf{Optimistic Strategy ($\Piopt$):}
        This strategy prioritizes execution efficiency by eliminating computational latency ($\Tcomp \to 0$).
        However, to maintain security without verification, it must enforce a substantial protocol delay, denoted as $\Twin$.
        With small overhead gas denoted as $\Cbase$, the total friction is thus dominated by the waiting cost:
        \[
            \mathcal{F}(\Piopt) \approx \wgas \cdot \Cbase + \wproto \cdot \Twin.
        \]

    \item \textbf{Validity Strategy ($\Pival$):}
        This strategy prioritizes immediate finality by eliminating protocol delay ($\Tproto \to 0$).
        To achieve this, it incurs the physical time for proof generation, denoted as $\Tproof$, and the higher gas cost $\Cverif$ for verification.
        The total friction is thus dominated by the verification overhead:
        \[
            \mathcal{F}(\Pival) \approx \wgas \cdot \Cverif + \wcomp \cdot \Tproof.
        \]
\end{itemize}

A rational system design should adopt the Validity Strategy if and only if it yields a lower total friction than the Optimistic Strategy, i.e., $\mathcal{F}(\Pival) < \mathcal{F}(\Piopt)$.
Substituting the expressions above:
\[
    \wgas \cdot \Cverif + \wcomp \cdot \Tproof < \wgas \cdot \Cbase + \wproto \cdot \Twin.
\]
Rearranging the terms to group the cost components yields the \emph{Fundamental \sys{} Inequality}:

\begin{equation}
    \label{eq:decision_boundary}
    \underbrace{\wproto \cdot \Twin}_{\text{Opportunity Cost}} > \underbrace{\wgas \cdot (\Cverif - \Cbase) + \wcomp \cdot \Tproof}_{\text{Verification Overhead}}.
\end{equation}

This inequality provides the rigorous logic for variant selection: the system should switch to validity strategy only when the implicit \emph{opportunity cost} of waiting exceeds the explicit \emph{verification overhead}.

\paragraph{The Proposed Solutions.}
Guided by this derivation, we propose \sys, which comprises two interoperable protocols and explicitly implements these two boundary strategies to cover the full range of FL scenarios:
(1)~\zkccfl, which implements $\Piopt$:
Designed for the regime where the inequality fails (e.g., cost-sensitive tasks).
This protocol utilizes a \emph{Commit-and-Challenge} mechanism to minimize on-chain costs, accepting a challenge window to secure the system.
(2)~\zkrfl, which implements $\Pival$:
Designed for the regime where the inequality holds (e.g., time-sensitive tasks).
This protocol utilizes a \emph{Commit-with-Proof} mechanism to provide immediate settlement, justifying the computational and gas premium for instant liquidity.

Both protocols are built upon a shared set of off-chain circuit gadgets and on-chain functions, ensuring that \sys remains modular.
\begin{figure}[t]
    \centering
    \includegraphics[]{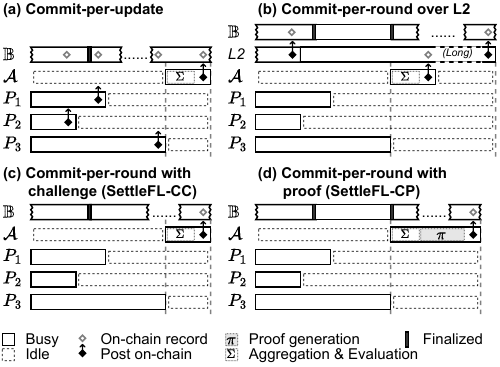}
    \caption{
        Timeline Analysis of Latency.
        \textnormal{By synchronizing settlement with the aggregation cycle, \sys{} eliminates extrinsic delays, bounding latency to the theoretical minimum.}
    }

    \Description{
        Timeline diagrams comparing four commitment strategies.

        (a) Commit-per-update:
        - Visual: Frequent interaction.
        - Flow: Participants compute -> Immediate update to Blockchain.
        - Blockchain Timeline: Dotted with many black diamonds (records).

        (b) Commit-per-round over L2:
        - Visual: Two-layer interaction.
        - Flow: Participants -> L2 (Long block time) -> Main Blockchain.
        - Difference: Adds an intermediate L2 timeline above the chain.

        (c) Commit-per-round (Standard):
        - Flow: Participants compute -> Aggregator performs Aggregation (Sigma) -> Single commit to Blockchain.
        - Visual: A distinct aggregation block precedes the on-chain record.

        (d) Commit-per-round with proof (Proposed):
        - Flow: Participants compute -> Aggregator performs Aggregation (Sigma) -> **Proof Generation (Pi)** -> Single commit.
        - Visual: Adds a gray "Proof generation" block after aggregation, delaying the on-chain record compared to (c).
    }

    \label{fig:finality}
\end{figure}

\paragraph{Settlement Timeline.}
To intuitively illustrate how \sys{} minimizes $\mathcal{F}$ compared to prior paradigms, Figure~\ref{fig:finality} presents a timeline analysis of settlement latency.
Traditional approaches typically suffer from either high interaction overhead due to frequent commit-per-update mechanisms (Figure~\ref{fig:finality}a), or extrinsic settlement delays imposed by generic L2 block times (Figure~\ref{fig:finality}b).
In contrast, both variants of \sys{} synchronize settlement strictly with the FL aggregation cycle.
Whether issuing a single per-round commitment instantly (Figure~\ref{fig:finality}c) or waiting solely for proof generation (Figure~\ref{fig:finality}d), the design eliminates all other extrinsic protocol delays, bounding the total latency to the theoretical minimum.

\section{Protocol Design}
\label{sec:protocol}

We propose a protocol family whose key distinction lies in the finality mechanism.
This design constitutes the primary contribution of this paper.

\subsection{Architecture}
\label{sec:architecture}
\begin{figure}[ht]
    \centering
    \includegraphics[width=0.4\textwidth]{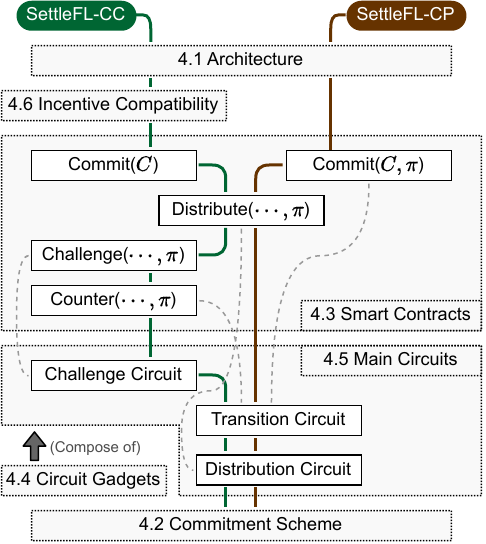}
    \caption{System architecture overview.
        \textnormal{Shared cryptographic primitives underpin the architecture, enabling specialized modules to enforce distinct security assumptions.}
    }

    \Description{
        **System Modules:**
        - 4.1 Architecture
        - 4.2 Commitment Scheme
        - 4.3 Smart Contracts
        - 4.4 Circuit Gadgets
        - 4.5 Main Circuits
        - 4.6 Incentive Compatibility

        **Actors:**
        - SettleFL-CC
        - SettleFL-CP

        **Smart Contract Functions:**
        - Commit($C$)
        - Commit($C, \pi$)
        - Distribute($\cdots, \pi$)
        - Challenge($\cdots, \pi$)
        - Counter($\cdots, \pi$)

        **Circuit Components:**
        - Challenge Circuit
        - Transition Circuit
        - Distribution Circuit
        - (Compose of)

        **Visual Relationships:**
        - The diagram is layered from Actors (top) to Smart Contracts (middle) to Circuits (bottom).
        - CC (Green) connects to Commit($C$) and Incentive Compatibility.
        - CP (Brown) connects to Commit($C, \pi$), Distribute, and the Commitment Scheme.
        - The Transition Circuit serves as a dependency, connecting via arrows to the Challenge Circuit and Distribution Circuit.
    }

    \label{fig:architecture}
\end{figure}

To implement the proposed protocol family, we design a hierarchical architecture, illustrated in Figure~\ref{fig:architecture}.
The architecture is built upon a common base used by both protocols.
At the lowest level, a unified \emph{commitment scheme} (Section~\ref{sec:commitment-scheme}) serves as the foundation for state representation.
Building on this, we define a set of cryptographic \emph{circuit gadgets} (Section~\ref{sec:gadgets}) that act as the construction primitives for all circuits.
These primitives are assembled into the core \emph{transition} and \emph{distribution circuits} (Section~\ref{sec:circuits}), while the smart contract (Section~\ref{sec:contract}) shares the fundamental \emph{distribute} logic.

Leveraging this shared hierarchy, the architecture branches with specialized modules to enforce specific security assumptions.
\zkrfl{} utilizes a verifiable \emph{commit} function that strictly verifies the validity of the reward settlement transition.
In contrast, \zkccfl{} employs an optimistic contract designed for the \emph{Commit-and-Challenge} workflow.
This branch extends the shared circuits with a specialized \emph{challenge circuit}, reuses the transition circuit as a \emph{counter circuit}, and incorporates an additional \emph{incentive compatibility} layer (Section~\ref{sec:incentive}) to secure the dispute window against malicious actors.

\paragraph{Adaptation from Rollup Mechanisms.}
While \sys{} draws architectural inspiration from the generic Optimistic and Validity Rollups introduced in Section~\ref{subsec:bcfl}, a direct application of standard rollup stacks would restrict the system to \emph{windowed finality}.
As positioned in the settlement landscape in Figure~\ref{fig:plot_finality_interaction}, naively increasing the commitment frequency to a per-round basis to achieve \emph{quasi} or \emph{commit} finality is impractical, as standard rollups are optimized for general-purpose workloads, incurring prohibitive gas costs and verification latency at high frequencies~\cite{aliefFLB22023}.

To overcome this, \sys{} trades general-purpose execution for domain-specific efficiency tailored to FL workloads.
For the validity workflow, our specialized commitment scheme significantly accelerates proof generation by streamlining circuit logic.
For the optimistic workflow, it enables dispute resolution within a single interaction round, bypassing the complex multi-round interactive fraud proofs typical in traditional optimistic rollups.
\begin{figure}[t]
    \centering
    \includegraphics[width=\columnwidth]{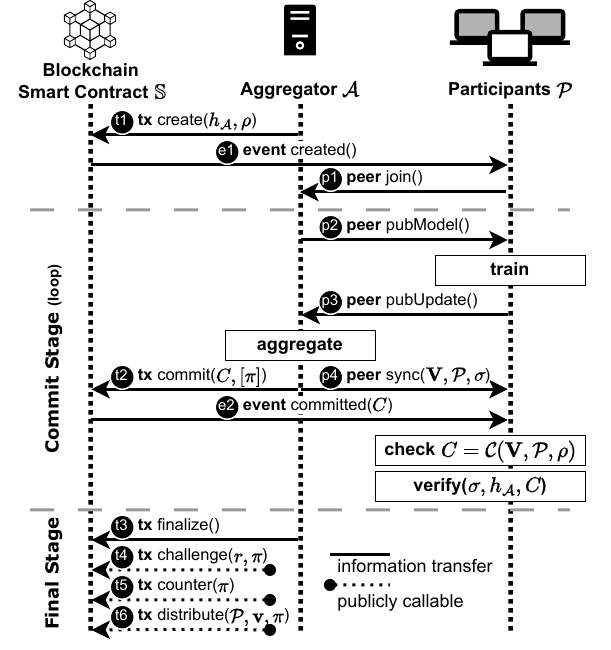}
    \caption{System Workflow.
        \textnormal{\textbf{tx}: on-chain transaction; \textbf{event}: smart contract event; \textbf{peer}: off-chain peer communication.}
    }
    \Description{System Workflow.
        ## Roles:
        - Blockchain (Smart Contract)
        - Aggregator
        - Participants

        ## Sequence:

        **Init**
        - (t1) tx create()
        - (e1) event created()
        - (p1) peer join()

        **Commit Stage (loop)**
        - (p2) peer pubModel()
        - action train()
        - (p3) peer pubUpdate()
        - action aggregate()
        - (t2) tx commit()
        - (p4) peer sync()
        - (e2) event committed()
        - action check()
        - action verify()

        **Final Stage**
        - (t3) tx finalize()
        - (t4) tx challenge()
        - (t5) tx counter()
        - (t6) tx distribute()
    }
    \label{fig:workflow}
\end{figure}

\paragraph{Workflow}

The \zkrfl{} and \zkccfl{} protocols share a unified workflow sequence across multiple FL rounds, as illustrated in Figure~\ref{fig:workflow}.
The \agg \step{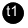} creates a task by staking rewards and publishing metadata on-chain.
Upon the \step{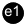} event, one of \pars, denoted as $P_i$, \step{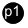} joins a task of interest via the peer-to-peer network.
Each round begins when the \agg \step{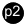} broadcasts the global model.
$P_i$ trains locally and \step{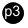} publishes an update to the \agg.
After collecting updates, the \agg aggregates them and \step{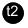} commits a compressed state to the blockchain.
For \zkrfl, this commitment acts as a validity proof, whereas \zkccfl requires only the state data.
The \agg also \step{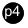} synchronizes the latest state and signature with $P_i$.
Upon the \step{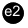} commitment event, $P_i$ validates its update via $\mathcal{C}$ and signature verification, then decides whether to continue or drop out.
After the final round, the \agg \step{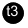} marks the task as complete.
In the \zkccfl workflow, if any inconsistency is found, $P_i$ or any other user may \step{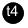} challenge the commitment with a SNARK proof.
The \agg must then \step{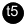} respond with a counter-proof before the dispute window closes.
Finally, the \agg or any user may \step{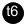} trigger reward distribution.
This action is available immediately for \zkrfl, while \zkccfl requires the dispute window to expire without valid challenges.

\subsection{Commitment Scheme}
\label{sec:commitment-scheme}
\begin{figure*}[ht]
    \centering
    \includegraphics[width=\textwidth]{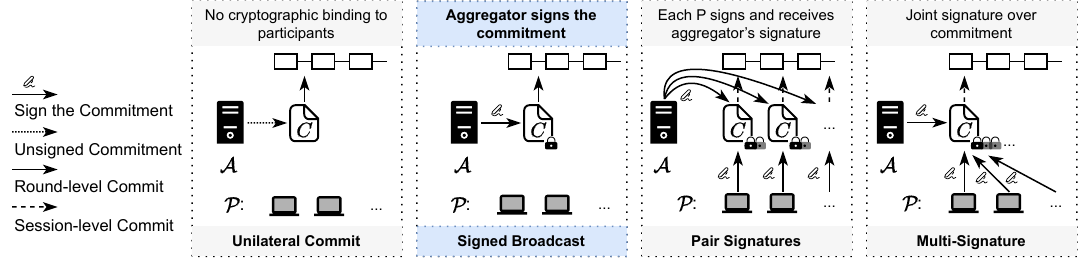}
    \caption{Design space of cryptographic binding mechanisms for reward enforcement.
        \textnormal{We show four approaches with increasing stronger binding. We adopts the second, using \emph{aggregator-signed commitments} to balance cost, scalability, and verifiability.}
    }

    \Description{
        The diagram displays four distinct commitment architectures involving an Aggregator ($\mathcal{A}$), Client ($\mathcal{C}$), and Participants ($\mathcal{P}$).

        Legend key:
        - Solid arrow: Round-level Commit
        - Dashed arrow: Session-level Commit
        - Dotted arrow: Unsigned Commitment
        - Pen icon: Sign the Commitment

        1. Unilateral Commit:
        - Feature: "No cryptographic binding to participants"
        - Structure: $\mathcal{A}$ sends to $\mathcal{C}$.

        2. Signed Broadcast: (Highlighted)
        - Feature: "Aggregator signs the commitment"
        - Structure: $\mathcal{A}$ signs and transmits to both $\mathcal{C}$ and $\mathcal{P}$.

        3. Pair Signatures:
        - Feature: "Each $\mathcal{P}$ signs and receives aggregator's signature"
        - Structure: $\mathcal{A}$ engages in bidirectional signing with multiple $\mathcal{P}$s.

        4. Multi-Signature:
        - Feature: "Joint signature over commitment"
        - Structure: $\mathcal{A}$ and multiple $\mathcal{P}$s coordinate for a collective signature.
    }

    \label{fig:design-space}
\end{figure*}

The foundation of our protocol family centers on the unified notion of a \emph{commitment}: one can prove knowledge of its content or the correctness of its transition, which together underpin reward enforcement in both validity and optimistic workflows.

\paragraph{Rationale.}
Blockchain offers a natural foundation for trustless reward settlement in FL.
However, on-chain execution remains costly due to gas cost and contract size limits (e.g., 24KB on Ethereum), creating scalability challenges for large participant sets.
While alternative chain solutions reduce these costs, they often trade off trust assumptions.

We instead explore protocol-level strategies that minimize interaction while preserving decentralization.
As illustrated in Figure~\ref{fig:design-space}, binding commitments to participants can be realized through several approaches, ranging from unilateral commits with no authentication, to fully aggregated multi-signatures.
While stronger cryptographic binding (e.g., multi-signature) can achieve immediate finality, it often incurs high latency and coordination overhead, limiting scalability.

Our protocol therefore adopts the \textbf{aggregator-signed broadcast model}, where the aggregator signs each round's commitment and participants verify it off-chain.
This choice avoids costly multi-party coordination while still providing strong attribution, but accepts a bounded risk: if the aggregator misbehaves, at most one round of rewards may be lost before participants can exit.
This trade-off enables light on-chain logic, low latency in optimistic execution, and scalable support for large FL jobs.

\paragraph{Cryptographic Primitives.}
In the circuit, we use Poseidon~\cite{grassiPoseidon2021}, a SNARK-friendly hash function, with notation $H_k(\cdot)$ for Poseidon of arity $k$.
For signatures, we adopt EdDSA~\cite{bernsteinHighSpeed2011}; since an EdDSA public key consists of two curve coordinates, we use its hash $\hagg = H(\mathbf{A}_x,\mathbf{A}_y)$ as the aggregator's on-chain identifier to minimize public signals and gas cost.
In the Solidity-based \ccfl implementation, Keccak and ECDSA are used instead, and we write $H(\cdot)$ and $\operatorname{sign}(\cdot)$ for generic hash and signature.

\paragraph{Notation.}
We use $[T]$ to denote the set $\{0,1,\dots,T-1\}$.
$\mathbf{1}[\cdot]$ is the indicator function, which equals $1$ if the condition is true and $0$ otherwise.
We define four constants: $T$ is the number of rounds, $N$ is the number of participants, $B$ is the batch size, and $\kappa$ is the arity of the Poseidon hash function.

\paragraph{Commitment.}
An aggregator \agg is required to publish on-chain its public key hash $\hagg$, which uniquely binds its identity.
A set of participants $\pars = \{P_1, \ldots, P_N\}$ collaboratively train a global model, where each $P_i$ is identified by its address.
In each round $r$, \agg commits a reward allocation vector $\mathbf{v}^r = (v_1^r, \ldots, v_N^r)$, where $v_i^r$ is the reward assigned to $P_i$.

A commitment $C$ is defined as
\[
    C = H(H(\bm{V}), H(\pars), \salt),
\]
where $\bm{V}=\{\mathbf{v}^0,\dots,\mathbf{v}^{T-1}\}$ is the reward allocation matrix, $H$ is the hash function, and $\salt$ is a random salt.
If the effective number of rounds or participants is less than the predefined maximum $T$ and $N$, $\bm{V}$ is padded with $0$ since the circuit size is fixed at compile time.
\agg signs $C$ with its private key, producing $\sigma_C = \operatorname{sign}_{\agg}(C)$.
Each round, $C$ is posted on-chain, while $\sigma_C$ is broadcast to participants.
Participants verify $\sigma_C$ against $\pubkeyA$ to ensure authenticity.

If anyone detects inconsistency, they can issue a challenge by providing a SNARK proof that references the last valid commitment at round $r-1$.
To protect \agg from spurious challenges, it may counter by presenting a SNARK proof of a valid transition from $C^{r-1}$ to $C^r$.

\paragraph{Implementation.}
Because the protocol family shares the core verification logic, implementing these shared components natively instantiates both workflows.
The logic is written directly in Solidity for \ccfl, and as Groth16 SNARK circuits for both \zkccfl and \zkrfl.
In Solidity, no padding is required since inputs are handled natively by the EVM.

\subsection{Smart Contract}
\label{sec:contract}

The smart contract serves as the core of reward enforcement in \sys{}, implemented through a state-driven design that we describe in terms of its state machine and contract logic.

\paragraph{State Machine.}

As shown in Figure~\ref{fig:contract_flow}, the contract implements a state machine coordinating the job lifecycle: commitment, reward initialization, optional challenge/counter, and final settlement through one-shot (1s) or multi-shot (ms) distribution.
Most transitions are externally triggered, except for timeout-based transitions (e.g., delayed entry into the distribution phase).

Since only the commitment is on-chain, proofs are required for committing, challenging, or distributing.
In both \zkrfl and \zkccfl, a SNARK proof is generated by a SNARK prover; in \ccfl, the proof consists of all original data used to compute the commitment.
\begin{figure}[t]
    \centering
    \includegraphics[width=\columnwidth]{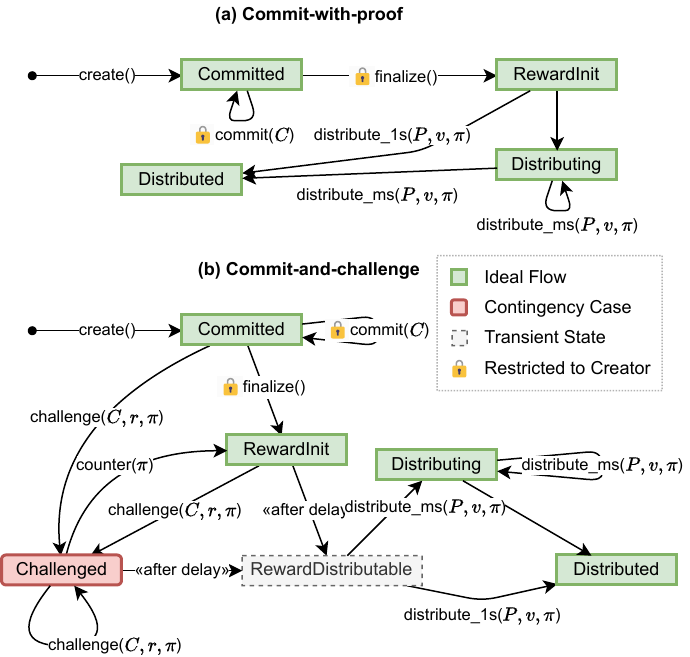}
    \caption{
        Contract state transitions.
        \textnormal{In (b) Commit-and-challenge, if a dispute exists, challenges are allowed at any point before distribution. The protocol follows either a multi-shot (ms) or one-shot (1s) path based on the chosen distribution method.}
    }
    \Description{
        The figure displays two related state transition diagrams labeled (a) and (b).

        ## (a) Commit-with-proof
        [*] --> A_Committed: create()
        A_Committed --> A_Committed: commit(C)
        A_Committed --> A_RewardInit: finalize()

        A_RewardInit --> A_Distributing
        A_RewardInit --> A_Distributed: distribute_1s(P, v, \pi)

        A_Distributing --> A_Distributing: distribute_ms(P, v, \pi)
        A_Distributing --> A_Distributed: distribute_ms(P, v, \pi)

        ## (b) Commit-and-challenge
        [*] --> B_Committed: create()
        B_Committed --> B_Committed: commit(C)
        B_Committed --> B_RewardInit: finalize()

        B_Committed --> Challenged: challenge(C, r, \pi)
        B_RewardInit --> Challenged: challenge(C, r, \pi)
        Challenged --> Challenged: challenge(C, r, \pi)
        Challenged --> B_RewardInit: counter(\pi)
        Challenged --> RewardDistributable: after delay

        B_RewardInit --> RewardDistributable: after delay

        RewardDistributable --> B_Distributing
        RewardDistributable --> B_Distributed: distribute_1s(P, v, \pi)

        B_Distributing --> B_Distributing: distribute_ms(P, v, \pi)
        B_Distributing --> B_Distributed: distribute_ms(P, v, \pi)

    }

    \label{fig:contract_flow}
\end{figure}

\paragraph{Smart Contract Logic.}

Algorithm~\ref{alg:ccfl_core} defines the logic of $\mathbb{S}$.
It maintains the \agg's public key hash $\hagg$, current round $r$, salt $\salt$, unlock time $\unlockTime$, distribution count $\xi$, verifier function $\verify{(\cdot)}$, commitment history $\mathbf{C} = [C^1, \ldots, C^{r-1}]$, distribution mode $\mu \in \{\textit{one-shot},\textit{multi-shot}\}$, current status $\status \in \{\statusCommitted, \statusRewardInit,\allowbreak \statusChallenged, \statusPaying,\allowbreak \statusPaid\}$, the last successfully challenged round $\lastround$, and the protocol variant $\mode \in \{\modeChallenge,\modeProof\}$.
In $\modeProof$, $\opcommit$ requires a succinct proof $\proofpi{commit}$ to be verified.
$\opchallenge$ and $\opcounter$ are enabled in $\modeChallenge$.
Functions with $\onlyOwner$ are owner restricted.
\begin{algorithm}[t]
    \caption{Smart Contract $\mathbb{S}$ for $\Pi_{\piccfl}$}\label{alg:ccfl_core}
    \small

    \SetKwProg{func}{function}{}{}

    \func{$constructor$($\hagg$, $\salt$, $\mu$, $\mode$) \textbf{public}} {
        $\mathbf{C} \leftarrow [~]$, $r \leftarrow 1$, $r_{\text{last}} \leftarrow 2$,
        $\status \leftarrow \statusCommitted$; \\
    }

    \func{$\opcommit$($C$, $[\proofpi{commit}]$) \textbf{public} \onlyOwner}  {
        \textbf{assert} $\status \in \{\statusCommitted\}$; \\
        \If{$\mode = \modeProof$}{
            \textbf{assert} $\verify{transition}$($r, \mathbf{C}^{r-1}, C, \hagg; \proofpi{commit}$); \\
        }
        append $C$ to $\mathbf{C}$; \\
        $r \leftarrow r + 1$;
    }

    \func{$\opfinalize$() \textbf{public} \onlyOwner} {
        \textbf{assert} $\status \in \{\statusCommitted, \statusChallenged\}$; \\
        $\status \leftarrow \statusRewardInit$,
        $\unlockTime \leftarrow \textit{block.timestamp} + \Delta$;
    }

    \func{$\opchallenge$($r'$, $\proofpi{challenge}$) \textbf{public}} {
        \textbf{assert} $\mode = \modeChallenge$; \\
        \textbf{assert} $\status \in \{\statusCommitted, \statusRewardInit, \statusChallenged\}$; \\
        \textbf{assert} $r' > \lastround$; \\
        \textbf{assert} $\verify{challenge}$($r', \mathbf{C}^{r'-2}, \hagg; \proofpi{challenge}$); \\
        $\lastround \leftarrow r'$ \Comment*[r]{rounds $< r'$ are stale}
        $\status \leftarrow \statusChallenged$,
        $\unlockTime \leftarrow \textit{block.timestamp} + \Delta$;
    }

    \func{$\opcounter$($r'$, $\proofpi{counter}$) \textbf{public}} {
        \textbf{assert} $\mode = \modeChallenge$; \\
        \textbf{assert} $\status \in \{\statusChallenged\}$; \\
        \textbf{assert} $r' \geq 3$; \\
        \textbf{assert} $\verify{transition}$($r'-1, \mathbf{C}^{r'-2}, \mathbf{C}^{r'-1}, \hagg; \proofpi{counter}$); \\
        $\status \leftarrow \statusRewardInit$,
        $\unlockTime \leftarrow \textit{block.timestamp} + \Delta$;
    }

    \func{$\opdistribute$($\vec{p}$, $\vec{s}$, $\proofpi{distribute}$) \textbf{public}} {
        \textbf{assert} $\status \in \{\statusRewardInit, \statusChallenged\}$; \\
        \textbf{assert} \textit{block.timestamp} $\geq \unlockTime$; \\

        \textbf{assert} $\verify{distribute}$($r-1, \mathbf{C}^{r-1}, \hagg, \vec{s}, \vec{p}; \proofpi{distribute}$); \\

        \For{$i = 1$ \KwTo $|\vec{p}|$}{
            \textit{transfer}($p_i$, $s_i$);
        }

        \If{$\mu = \text{multi-shot}$}{
            $\xi \leftarrow \xi + B$\;
            $\status \leftarrow \statusPaid \;\mathbf{if}\; \xi \geq |\pars| \;\mathbf{else}\; \statusPaying$\;
        } \Else{
            $\status \leftarrow \statusPaid$ \Comment*[r]{In one-shot, $\vec{p}=\pars$}
        }
    }
\end{algorithm}

\subsection{Circuit Gadgets}
\label{sec:gadgets}

We present a set of reusable gadgets that serve as building blocks for the verifier circuit.
Throughout, we denote constraints by $\langle \cdot \rangle$, indicating that the enclosed relation must hold in the proof system's field.
Within a constraint, ``$=$'' expresses equality to be enforced, while ``$\gets$'' denotes assignment of a value, with the resulting equality enforced implicitly.

The detailed constraint specifications are provided in Appendix~\ref{appendix:verifier_gadgets}; here we focus on the design rationale and semantics.

\paragraph{Rationale.}
We opt for a domain-specific architecture over a generic Zero-Knowledge Virtual Machine (zkVM)~\cite{ben-sassonSNARKs2013} to avoid prohibitive computational overhead.
First, supporting generic instructions makes zkVM circuits orders of magnitude larger than our tailored design.
Second, \sys{} uses SNARK-friendly primitives like Poseidon and EdDSA.
A Poseidon hash requires only about 300 constraints, compared to 150,000 for Keccak-256~\cite{arnaucubeKeccak256circom2021}.
Third, by exploiting the fixed structure of FL reward matrices, our \gadgetname{LPI} and \gadgetname{BatchExtractor} gadgets achieve $O(T)$ round selection and $O(N)$ participant extraction with linear constraint counts.
These optimizations reduce proof generation time from minutes to seconds, making per-round commitments economically viable.

\paragraph{Length-Padded Indicator (\gadgetname{LPI}).}
$\gadgetname{LPI}(x)$ takes an integer $x$ and produces a binary vector of length $T$ with the first $x$ entries set to $1$ and the rest set to $0$.
Thus $\gadgetname{LPI}(x) = (\{1\}^x, \{0\}^{T-x})$.
This vector is mainly used as a prefix selector for rounds or batches.
The constraints enforce three properties: the sum of outputs equals $x$, each entry is boolean, and there is exactly one $1\!\to\!0$ transition.
These constraints ensure a scalable $O(T)$ construction.

\paragraph{Zero Comparator (\gadgetname{IsZero}).}
The $\gadgetname{IsZero}(x)$ gadget outputs $y\in\{0,1\}$ indicating whether $x=0$.
It uses an auxiliary witness $\mathit{inv}$, which equals $0$ if $x=0$ and $x^{-1}$ otherwise.
This trick allows checking equality to zero with only two constraints.

\paragraph{Modular Hash ($H_{M}$).}
Poseidon supports only a bounded arity up to a maximum of $16$ due to best security practices; therefore, we set $\kappa=16$ as the default.
We therefore chunk a long vector $(x_1,\dots,x_N)$ into groups of at most $\kappa-1$ elements,
and iteratively absorb them into a chained hash state seeded with $\salt$.
This \emph{modular hash} ensures that any input length can be processed, while minimizing the number of Poseidon hashes reduces per-element cost.

\paragraph{Batch Extractor (\gadgetname{BatchExtractor}).}
Given a sequence $\vec{x}$ of length $N$, a batch size $B$, and an index $a$,
$\gadgetname{BatchExtractor}(\vec{x},a) \break = (x_{a B}, \ldots, x_{a B+B - 1})$ outputs the $a$-th batch of size $B$.
To reduce constraints, we introduce an intermediate selection matrix enabled by $\gadgetname{IsZero}$, leading to linear complexity in $N$ instead of exponential naive conditional selection.

\paragraph{Vectorization ($\text{vec}$).}
For hashing purposes, a reward matrix $V\in\mathbb{F}^{N\times T}$ is serialized into a one-dimensional vector in row-major order:
$\text{vec}(V) = (V_{0,0}, V_{0,1},\dots,V_{0,T-1},\,V_{1,0},\dots,V_{N-1,T-1})$.
This ensures a canonical ordering before applying the modular hash.

\paragraph{Matrix Sum Row ($\gadgetname{SumRow}$).}
Given a LPI vector $\vec{r}$, this gadget extracts the contributions of each participant across all rounds from a reward matrix $V$ to a vector $\vec{y}$.

\paragraph{Matrix Round Mask Check ($\gadgetname{MaskCheck}$).}
This gadget ensures $V$ is nonzero only in the selected round.
We enforce \constrain{V_{i,t}\cdot (1-r_t)=0}, $\forall (i,t) \in [N]\times[T]$, to ensure consistency with the LPI vector $\vec{r}$.

\paragraph{Public Key Digest ($H_\text{PK}$).}
To compress public signals to reduce on-chain gas cost, the aggregator's digest is $\hagg = H_\text{PK}(\mathbf{A}) = H_2(\mathbf{A}_x, \mathbf{A}_y)$,
reducing two field elements to one hash output.

\paragraph{Verify Commitment ($\gadgetname{VerifyCommitment}$).}
A wrapper gadget that combines vectorization, modular hashing, and EdDSA signature verification.
It outputs the canonical commitment $C$ and checks that the aggregator's signature $\sigma_C$ on $C$ is valid under its public key.

\subsection{Main Circuits}
\label{sec:circuits}

By composing the above gadgets, we obtain a set of circuits that capture the essential actions of our protocol.
Together, they form a closed system: commitments can be published, challenged, countered, and finally distributed,
all with succinct verification and bounded circuit complexity.
For completeness, the pseudocode of these verifiers is provided in Appendix~\ref{appendix:verifier_circuits}, where $\pubpar{underlined}$ inputs or outputs are public and all others are private.

Notably, all circuits compute and output a public key aggregation hash $\hagg$ derived from $\mathbf{A}$ to bind the operations to a specific participant group , and they utilize a private $\salt$ in the commitment verification to ensure cryptographic hiding properties.
We then describe the role of each circuit in the protocol and how they jointly form a closed and verifiable system.

\paragraph{Transition Verifier.}
This circuit instantiates $\verify{transition}$ and checks that two consecutive commitments $(C_t,C_{t+1})$ are consistent.
It ensures that the reward matrix $\bm{V}_2$ for round $t+1$ is valid using a \gadgetname{MaskCheck} gadget , and that both commitments are correctly verified using their respective signatures and the shared $\salt$.
Since $C_t$ is derived directly as a slice of the later matrix, and similarly, its participant vector $\vec{p}_1$ is derived from $\vec{p}_2$ via LPI filtering, their validity does not need to be rechecked from scratch.
This verifier is used in the \emph{commit} action of \zkrfl and the \emph{counter} action of \zkccfl.

\paragraph{Challenge Verifier.}
This circuit instantiates $\verify{challenge}$ and proves knowledge of the data underlying a given commitment $C$, together with a valid signature and $\salt$.
It additionally utilizes \gadgetname{MaskCheck} to ensure the matrix $\bm{V}$ accurately belongs to the claimed round.
It does not by itself prove misbehavior; rather, it ensures that a challenger can faithfully reconstruct the disputed state.
Whether the commitment is actually inconsistent is determined only when a \emph{counter} action provides a transition proof.

\paragraph{Distribution Verifier.}
This circuit instantiates $\verify{distribute}$ and besides commitment verification, computes the participant-level reward vector for a specific batch $b$.
It sums the reward matrix along the selected round into a vector $\vec{s}$ , and utilizes a \gadgetname{BatchExtractor} to extract the relevant subset of participants $\vec{p}_p$ and their corresponding rewards $\vec{s}_p$.
These public outputs can be directly used to settle payments in the \emph{distribute} action.

\subsection{Incentive Compatibility}
\label{sec:incentive}

While \zkrfl{} relies on cryptographic soundness, \zkccfl{} achieves rational security through economic incentives.
To guarantee rational security in the dispute resolution process, we first formalize the concept of \emph{incentive compatibility (IC)}.

\begin{definition}[Incentive Compatibility]
    A protocol is \emph{incentive compatible} if for every player $i \in M$, assuming all other players follow the protocol, the honest strategy $s_i^*$ maximizes player $i$'s utility:
    $$
    u_i(s_i^*, s_{-i}^*) \geq u_i(s_i', s_{-i}^*) \quad \forall s_i' \in \mathcal{S}_i\,,
    $$
    where $\mathcal{S}_i$ is the strategy space of player $i$, and $u_i(s_i, s_{-i})$ denotes $i$'s utility under strategy profile $(s_i, s_{-i})$.
\end{definition}

\paragraph{Game Model.}
To evaluate IC, we model the interaction as a static complete-information normal-form game~\cite{costa-gomesCognition2001}:
$$
\mathcal{G} = (M, \{\mathcal{S}_i\}_{i \in M}, \{u_i\}_{i \in M}).
$$
The player set is $M = \{\agg, P_1, \dots, P_N\}$.
The aggregator's strategy space $\mathcal{S}_{\agg}$ includes honest execution, tampering with rewards, or aborting.
A participant's strategy space $\mathcal{S}_{P_i}$ includes honestly verifying and challenging when justified, submitting malicious challenges, or remaining passive.

Under this formulation, we establish that mutual honest behavior constitutes a Nash Equilibrium~\cite{myersonValue1992}.

\begin{theorem}[Nash Equilibrium under Rationality]
    If all players are rational and protocol parameters are set properly, then the strategy profile $(s_{\agg}^{\text{honest}}, s_{P_1}^{\text{honest}}, \dots, s_{P_N}^{\text{honest}})$ forms a Nash equilibrium of $\mathcal{G}$.
\end{theorem}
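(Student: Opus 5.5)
The plan is to verify the Nash condition directly. For each player $i \in M$ we fix every other player on the honest strategy and show that no unilateral deviation $s_i' \in \mathcal{S}_i$ raises $u_i$. To make this checkable, we first make the payoffs explicit in terms of the protocol parameters:
\begin{itemize}
    \item the stake $R$ locked by \agg at $\opjobcreation$, together with the per-participant rewards $v_i^r$;
    \item a challenge bond $d_c$ posted by a challenger, refunded plus a reward $\beta$ if the challenge stands, and forfeited to \agg if countered;
    \item the gas costs $g_{\mathrm{ch}}$ of $\opchallenge$ and $g_{\mathrm{co}}$ of $\opcounter$;
    \item the aggregator's private benefit $\delta$ from misallocating rewards in a round.
\end{itemize}
``Protocol parameters set properly'' will then be the explicit conditions $\beta > g_{\mathrm{ch}}$, $d_c > 0$, and the condition that \agg's loss when a tampered round is frozen exceeds $\delta$. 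The last one holds because, after a successful challenge, $\lastround$ blocks further progress and the stake is not distributed according to the tampered state.

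\textbf{Aggregator.} Assume all $P_i$ are honest, meaning they verify $\sigma_C$ and the recomputed $C$ off-chain and challenge whenever these are inconsistent. We then consider the three deviations in $\mathcal{S}_{\agg}$.
\begin{enumerate}
    \item \emph{Tampering.} Suppose \agg commits some $C^r$ that is not a valid transition from $C^{r-1}$. By SNARK soundness of $\verify{transition}$ (Groth16, as assumed in the threat model), \agg cannot produce an accepting $\proofpi{counter}$. An honest participant holds the signed pre-image of $C^{r-1}$, so by completeness of $\verify{challenge}$ it can always challenge. The contract then stays in $\statusChallenged$ and \agg loses the bond/slash. This loss exceeds $\delta$, so tampering is unprofitable.
    \item \emph{Aborting or withholding.} Refusing to commit or to $\opfinalize$ never returns the stake, since funds are locked in $\mathbb{S}$. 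In addition, $\opdistribute$ can be triggered by anyone once $\unlockTime$ passes. Aborting therefore yields utility no greater than honest completion.
    \item \emph{Commitment reversal.} This reduces to the tampering case, because the participant holds $\sigma_C$ on the announced commitment and can challenge with it.
\end{enumerate}
Under the stated threat model, rational participants exit after one missing reward, so the aggregator's maximum extractable gain is bounded by a single round. This bound is what makes a finite slashing parameter sufficient.

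\textbf{Participants.} Assume \agg is honest, so every committed $C^r$ is a valid transition. We consider the deviations in $\mathcal{S}_{P_i}$.
\begin{enumerate}
    \item \emph{Malicious challenge.} Completeness of $\verify{transition}$ guarantees that \agg always counters successfully within $\Delta$. The challenger forfeits $d_c$ and pays $g_{\mathrm{ch}}$, which is strictly negative utility. Stale challenges are rejected outright by the check $r' > \lastround$ and cost gas.
    \item \emph{Remaining passive.} With an honest aggregator, a passive participant receives exactly $\sum_r v_i^r$ through $\opdistribute$, which is the same as honest behavior. Honest behavior is therefore weakly optimal, which suffices for a (weak) Nash equilibrium.
    \item \emph{Dropping out early.} This forfeits future $v_i^r$ and cannot increase utility.
\end{enumerate}
Since the definition only requires $u_i(s_i^*, s_{-i}^*) \geq u_i(s_i', s_{-i}^*)$, weak optimality for participants together with strict optimality for \agg gives the equilibrium.

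The main obstacle is making the aggregator's punishment rigorous. The contract in Algorithm~\ref{alg:ccfl_core} does not literally slash: after a successful uncountered challenge, $\opdistribute$ is still allowed from $\statusChallenged$, but it distributes according to the last commitment. I would handle this by arguing that the distribution proof binds to the state at which honest participants stopped. Every round up to the last honestly signed commitment is then paid in full, so the aggregator gains at most $\delta$ for one round. I would then fold the ``properly set'' condition into requiring that this one-round gain be dominated by the forfeited bond, or by the loss of future participation. If this argument turns out to be too weak, the fallback is to state an explicit slashing parameter as part of the ``proper setting'' hypothesis.
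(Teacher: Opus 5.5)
There is a genuine gap. It sits in the two places where you derive payoffs from the contract rather than from the payoff specification of $\mathcal{G}$.

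\textbf{Aggregator.} The punishment you rely on does not exist in Algorithm~\ref{alg:ccfl_core}. The contract has no slashing or bond transfer, and $\lastround$ only gates $\opchallenge$. The owner may call $\opfinalize$ from $\statusChallenged$. After an uncountered challenge, $\opdistribute$ is still enabled once $\unlockTime$ passes, and it verifies against $\mathbf{C}^{r-1}$, the latest commitment. Participants detect tampering only after the bad commitment has been appended, so that latest commitment \emph{is} the tampered one. Your proposed repair, that distribution binds to the state where honest participants stopped, therefore fails. SNARK soundness of $\verify{transition}$ alone does not make tampering unprofitable.

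Your ``fallback'' is in fact the paper's whole argument. The paper never derives punishment from the contract. It posits $P_{\agg}^{\text{slash}}$ in $u_{\agg}$ and sets $I_{\agg}^{\text{caught}}=1$ on the all-honest profile. It then compares $R_{\agg}^{\text{model}}+R_{\agg}^{\text{bonus}}-C_{\agg}^{\text{commit}}$ with $R_{\agg}^{\text{model}}-C_{\agg}^{\text{commit}}-P_{\agg}^{\text{slash}}$ (tamper) and with $-P_{\agg}^{\text{slash}}$ (abort). Honesty is strictly better because all parameters are positive.

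\textbf{Participants.} Your passive case gives a tie only because you assume honest verification is free on the equilibrium path. In $\mathcal{G}$, the passive strategy is precisely ``skip verification to avoid cost.'' The paper's honest payoff is $R_i^{\text{reward}}+I_i^{\text{chal}}P_{\agg}^{\text{slash}}-C_i^{\text{gas}}$ with $I_i^{\text{chal}}=1$ on the honest profile, against $R_i^{\text{reward}}$ for passive. So honesty is a best response iff $P_{\agg}^{\text{slash}}>C_i^{\text{gas}}$, and this is the single condition that ``parameters set properly'' refers to.

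Your proof never produces this condition. Your $\beta>g_{\mathrm{ch}}$ is the natural analogue, but it is not used anywhere in your argument. With any positive verification cost and no compensation on the honest path, your tie becomes a strict loss, and the honest profile would not be an equilibrium. Crediting the bounty on the honest path is admittedly a modeling choice of the paper. If you keep your zero-cost model, your tie is valid, but you are then proving the claim for a different game in which your listed conditions play no role.

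Your malicious-challenge case does match the paper: with $I_i^{\text{fail}}=1$ it gives $R_i^{\text{reward}}-P_i^{\text{slash}}-C_i^{\text{gas}}$, strictly below the honest payoff.
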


In a real deployment, a player's utility $u_i$ is determined by a complex set of parameters, including model rewards, task bonuses, slashing penalties, and on-chain execution costs.
However, as detailed in our formal game-theoretic analysis in Appendix~\ref{appendix:incentive}, ensuring the equilibrium mathematically reduces to a single fundamental condition: $P_{\agg}^{\text{slash}} > C_i^{\text{gas}}$.

Intuitively, the aggregator has no incentive to tamper because dishonest commitments will be challenged and penalized.
For a participant, submitting a malicious challenge is irrational as it will be countered and penalized.
Therefore, the only rational deviation for a participant is to remain passive to save verification costs.
To ensure honest verification strictly dominates passive behavior, the potential slashing bounty $P_{\agg}^{\text{slash}}$ awarded for catching a malicious aggregator must strictly outweigh the gas cost $C_i^{\text{gas}}$ required to submit the challenge on-chain.

\section{Experimental Evaluation}
\label{sec:experiment}
\begin{table}[t]
    \caption{Characteristics of datasets.}
    \label{tab:fl-datasets}
    \centering
    \small
    \setlength{\tabcolsep}{3pt}
    \begin{tabular}{l|ccc}
        \toprule
        \textbf{Dataset} & \textbf{Samples} & \textbf{Classes} & \textbf{Task} \\
        \midrule
        MNIST~\cite{dengMNIST2012}& 70,000 & 10 & image classification  \\
        Fashion-MNIST~\cite{xiaoFashionMNIST2017} & 70,000 & 10 & image classification \\
        CIFAR-10~\cite{krizhevskyLearning2009} & 60,000 & 10 & image classification \\
        Bank~\cite{moroDatadriven2014} & 45,211 & 2 & tabular data classification \\
        Cora~\cite{mccallumAutomating2000} & 2,708 & 7 & graph node classification \\
        Twitter~\cite{goTwitter2009} & 1,194,723 & 2 & text classification \\
        \bottomrule
    \end{tabular}
\end{table}

We evaluate \sys{} through experiments coupled with both local and permissionless blockchain deployments.
Both \zkccfl{} and \zkrfl{} are implemented as distinct modes within a single smart contract.
Furthermore, for brevity in the figures throughout this section, we use the abbreviations CC and CP to denote \zkccfl{} and \zkrfl{}, respectively.

FL experiments primarily use the MNIST dataset~\cite{dengMNIST2012}, which serves as the main benchmark for end-to-end evaluation.
Additional experiments are included to assess generality across all datasets listed in Table~\ref{tab:fl-datasets}.
Each dataset follows its canonical train/test split; Twitter follows LEAF~\cite{caldasLEAF2019}.
We use LeNet-5~\cite{lecunGradientbased1998} for MNIST, a lightweight 4-layer CNN~\cite{goodfellowDeep2016} for Fashion-MNIST, ResNet-20~\cite{heDeep2016} for CIFAR-10, a 3-layer MLP~\cite{goodfellowDeep2016} for Bank, a 2-layer GCN~\cite{kipfSemiSupervised2017} for Cora, and FastText~\cite{joulinBag2017} for Twitter.

Blockchain experiments are conducted on a local Ethereum-compatible testnet (Hardhat), while end-to-end validation is additionally deployed on the public Ethereum testnet (Sepolia).
SNARK circuits are written in Circom~\cite{belles-munozCircom2023} and verified using Groth16 via snarkjs~\cite{baylinaSnarkjs2018}.
The simulation framework is built with Node.js, incorporating TensorFlow.js~\cite{smilkovTensorFlowjs2019} for local training and ethers.js for contract interaction.
All experiments are conducted on a server with dual Intel Xeon CPUs, 512 GB RAM and eight NVIDIA RTX 4090 GPUs, running Ubuntu 22.04.

\subsection{End-to-End Protocol Validation}

We validate the end-to-end functionality of \sys{} through real blockchain deployments on the public Ethereum Sepolia network, which provides realistic gas costs, latency, and contract behavior.
This setup closely approximates real-world conditions while isolating protocol correctness and scalability from infrastructural variability.

All \pars join the task, receive the global model, perform local training, and submit updates.
Depending on the chosen variant, the \agg either commits a reward allocation subject to challenge in \zkccfl{} or commits a validity proof for immediate settlement in \zkrfl{}.
\begin{figure}[t]
    \centering
    \includegraphics[width=0.47\textwidth, trim=8 10 4 10]{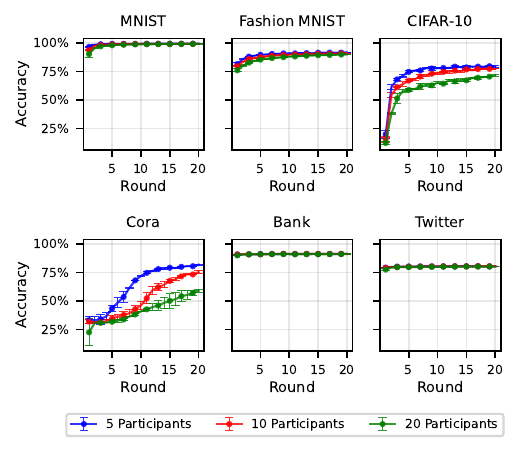}
    \caption{
        End-to-end protocol accuracy over training rounds under different participant counts.
        \textnormal{
            Results use a default learning rate of 0.001, a batch size of 32, and 2 local epochs per round, except for Cora (learning rate 0.015, 10 local epochs) and Twitter (1 local epoch).
        }
    }
    \Description{
        Three-panel line chart: the left, middle, and right panels correspond to MNIST, Fashion-MNIST, and CIFAR-10, respectively.
        In each panel, three curves represent experiments with 5, 10, and 20 participants.
        The x-axis denotes training rounds (1--20), and the y-axis shows global model test accuracy in percent.
        Vertical error bars indicate the minimum and maximum accuracies across multiple independent runs evaluated by the aggregator.
        All panels present results for exactly 20 training rounds under identical hyperparameters.
    }
    \label{fig:e2e-20-rounds-different-participants-accuracy}
\end{figure}

\paragraph{Training Performance.}
Participants are emulated on a single server for controllability, performing local training and aggregation as in standard FL.
The aggregator commits round-level reward allocations on-chain and broadcasts updated global models off-chain.
We evaluate accuracy progression across 20 training rounds under varying participant counts (5--20).
Figure~\ref{fig:e2e-20-rounds-different-participants-accuracy} reports results on MNIST, Fashion-MNIST, and CIFAR-10.
Both protocol variants share the same off-chain training logic, achieving accuracy comparable to standard FL.
This confirms that neither the reward enforcement mechanism of \zkccfl{} nor the proof generation of \zkrfl{} interferes with the model accuracy.

\paragraph{Realistic Deployment.}

To evaluate the protocol's scalability and practicality under realistic blockchain conditions, we deployed \sys{} on the Sepolia testnet.
Specifically, both \zkccfl{} and \zkrfl{} are evaluated on the MNIST dataset with 800 \pars{} over 50 training rounds.
The representative \zkccfl{} deployment (\texttt{0xca7b...}\footnote{\url{https://sepolia.etherscan.io/address/0xca7b84856557a065C82f6425E374f2912094EDA6}}) achieves a final accuracy of 94.55\%.
The representative \zkrfl{} deployment (\texttt{0x124c...}\footnote{\url{https://sepolia.etherscan.io/address/0x124c8F01728ea88b19B9CFd3B4a5981858A425F8}}) successfully enforces strict cryptographic verification while maintaining equivalent model convergence.
\begin{table}[t]
    \centering
    \caption{Cost breakdown comparison of representative Sepolia deployments for \sys{}. (800 \pars, 50 rounds on MNIST) }
    \label{tab:sepolia-cost}
    \begin{tabular}{lrrrr}
        \toprule
        & \multicolumn{2}{c}{\zkrfl} & \multicolumn{2}{c}{\zkccfl} \\
        \cmidrule(lr){2-3} \cmidrule(lr){4-5}
        \textbf{Operation} &
        \multicolumn{1}{c}{\textbf{mETH}} & \multicolumn{1}{c}{\textbf{USD}} &
        \multicolumn{1}{c}{\textbf{mETH}} & \multicolumn{1}{c}{\textbf{USD}} \\
        \midrule
        \opjobcreation & 0.012 & 0.02 & 0.012 & 0.02 \\
        \opcommit\ (per call) & 0.021 & 0.04 & 0.005 & 0.01 \\
        \opcommit\ (50$\times$ total) & 1.066 & 2.08 & 0.239 & 0.47 \\
        \opfinalize & 0.005 & 0.01 & 0.005 & 0.01 \\
        \opchallenge & -- & -- & 0.021 & 0.04 \\
        \opcounter & -- & -- & 0.020 & 0.04 \\
        \opdistributem\ (per call) & 0.110 & 0.21 & 0.110 & 0.21 \\
        \opdistributem\ (16$\times$ total) & 1.757 & 3.43 & 1.757 & 3.43 \\
        \midrule
        \textbf{Total} & \textbf{2.841} & \textbf{5.54} & \textbf{2.054} & \textbf{4.00} \\
        \bottomrule
    \end{tabular}

    \vspace{0.5em}

    \begin{minipage}{0.95\linewidth}
        \small
        \textbf{Gas price:} 0.075 gwei (ETH 7-day median, Feb 14--20, 2026).\\
        \textbf{ETH/USD:} 1,950 (snapshot on Feb 20, 2026).\\
        \textbf{Note:}
        Costs are in USD and mETH (1 mETH = 0.001 ETH).
    \end{minipage}
\end{table}

Table~\ref{tab:sepolia-cost} summarizes the gas cost breakdown for both variants.
These deployments incur total costs of 0.0028 ETH for \zkrfl{} and 0.0021 ETH for \zkccfl{}, corresponding to approximately \$5.54 and \$4.00, respectively.
As theoretically analyzed in previous sections, the fundamental divergence between the two variants lies in the \opcommit{} operation.
For \zkccfl{}, each \opcommit{} incurs minimal gas overhead, with the trade-off being the potential necessity of \opchallenge{} and \opcounter{} operations to resolve disputes.
Conversely, \zkrfl{} achieves immediate finality upon submission.
It securely validates the update via proof verification, entirely eliminating the need for interactive dispute resolution.
These empirical results directly validate our earlier claims regarding the trade-offs.
Furthermore, they provide a clear guideline for future deployment: \zkccfl{} is preferable for scenarios prioritizing minimal routine on-chain execution costs, whereas \zkrfl{} is better suited for applications demanding strict cryptographic certainty and non-interactive finality.

\subsection{Robustness Against Misbehavior}

We evaluate the robustness of \zkccfl{} under adversarial behavior from both aggregators and challengers.
In the first scenario, a malicious aggregator commits an incorrect model update.
Once challenged, the aggregator fails to produce a valid counter-proof and is thereby disqualified.
The challenger incurs a one-time cost of approximately 27.5k gas to initiate the challenge.

In the second scenario, a challenger repeatedly attempts to disrupt the protocol by submitting stale proofs from earlier rounds.
The initial challenge consumes about 27.5k gas, while each subsequent attempt costs roughly 25.7k gas.
The aggregator, upon issuing a valid counter, spends around 26k gas per response.
A successful counter leads to the forfeiture of the challenger's bond, which is transferred to the aggregator as compensation.
These results demonstrate that the protocol effectively deters unproductive disputes while preserving correctness through verifiable commitments.

\subsection{Cost and Latency Analysis}
\label{sec:cost_latency_analysis}

We compare four protocol variants: \zkccfl, \ccfl, \zkrfl, and \bcfl.
Figure~\ref{fig:gas-latency-comparison} reports the average gas cost and latency per round.
Specifically, the gas cost is computed over experiments with 10, 50, 100, and 200 training rounds, and the latency is evaluated under the 10 and 50 round settings.
Proofs, when required, were generated using \texttt{rapidsnark}~\cite{lomakaRapidsnark2025}.
\begin{figure}[t]
    \centering
    \includegraphics[width=0.47\textwidth, trim=3 10 0 10]{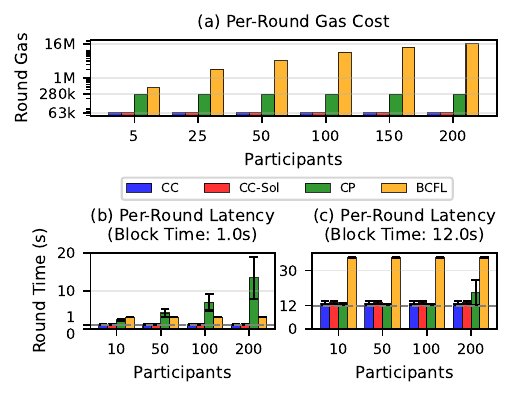}
    \caption{Per-round cost and latency comparison.
        \textnormal{\sys{} variants achieve stable gas usage and low latency.}
    }
    \Description{}
    \label{fig:gas-latency-comparison}
\end{figure}

As shown in Figure~\ref{fig:gas-latency-comparison}(a), both \zkccfl and \ccfl maintain stable gas cost across rounds and participants.
\zkrfl has constant but higher cost due to per-round proofs.
\bcfl scales linearly and soon becomes impractical.
Latency is estimated under two block times: 1s to simulate fast chains, and 12s to reflect Ethereum's average.
\zkccfl and \ccfl stay efficient in both cases.
\zkrfl suffers from proof delays, and its latency grows with the number of rounds, leading to larger error bars.
\bcfl accumulates on-chain bottlenecks.

Results confirm that the \sys{} family achieves low and predictable cost and latency, outperforming other designs.

\paragraph{Comparison to Prior Works.}

OpenFL~\cite{wahrstatterOpenFL2024} extends \bcfl{} with a feedback mechanism, but its baseline gas cost remains in the same order of magnitude as \bcfl{}.
StateFL~\cite{iacobanStateFL2024} instead relies on Perun~\cite{dziembowskiPerun2017} state channels.
Since its original paper did not report consistent gas values, we re-implemented its workflow using the same stack and measured its per-participant cost.
StateFL is round-invariant but scales linearly in participants, whereas \sys{} is participant-invariant but grows with rounds.
Figure~\ref{fig:statefl-incc-comparison} summarizes the trade-off.
\begin{figure}[t]
    \centering
    \includegraphics[width=0.48\textwidth, trim=9 10 6 10]{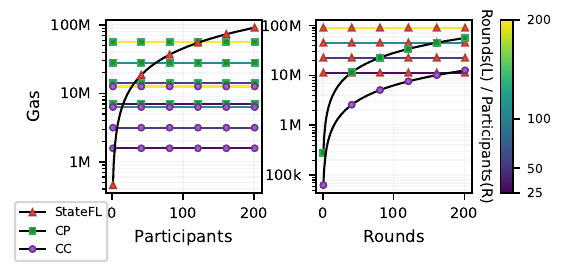}
    \caption{Gas cost comparison of StateFL and \sys{}.
        \textnormal{
            StateFL incurs 456k gas per participant. Conversely, \zkccfl{} and \zkrfl{} cost 63k and 281k gas per round, remaining more efficient except in high-round, low-participant settings.
        }
    }
    \Description{
        The figure compares gas consumption of StateFL and \sys{} across two scalability dimensions.
        \textbf{Left (x-axis: Participants):} StateFL shows a linearly increasing curve with participant count and remains flat across rounds.
        In contrast, \sys{} appears as multiple horizontal lines, each representing a fixed number of rounds (10, 50, 100, 200), since its cost does not depend on participants.
        \textbf{Right (x-axis: Rounds):} \sys{} shows a linearly increasing curve with rounds and remains flat across participants.
        StateFL appears as multiple horizontal lines, each corresponding to a fixed number of participants (10, 50, 100, 200).
        Gas values are derived from theoretical models calibrated by empirical measurements: about 456k gas per participant for StateFL (from our re-implementation on Perun) and about 63k gas per round for \sys{} (measured on Ethereum-compatible deployment).
        The curves highlight that \sys{} maintains lower cost in typical regimes and only exceeds StateFL when the number of rounds is very large (e.g., above 200) and the participant set is very small (e.g., fewer than 25).
    }
    \label{fig:statefl-incc-comparison}
    \vspace{-10pt}
\end{figure}

\subsection{Ablation Study}

To evaluate the impact of SNARK verification, we compare two versions of our protocol: \zkccfl, which uses SNARKs to offload computation off-chain, and \ccfl, a non-SNARK variant that performs all logic directly in the smart contract.

While both versions implement the same commit semantics, they differ in cryptographic primitives and execution models.
\zkccfl relies on SNARK-friendly primitives (Poseidon, EdDSA) and constant-size proofs, whereas \ccfl uses Ethereum-native operations (Keccak, ECDSA) with variable cost based on input size.

Figure~\ref{fig:challenge-counter-gas-comparison} shows the gas cost comparison across challenge and counter phases.
We observe that when SNARKs are used, the gas cost remains constant regardless of the number of participants or rounds.
In contrast, the non-SNARK implementation incurs a gas cost that grows linearly with the number of verifications.
Due to Ethereum's per-transaction gas limit ($2^{24}$), extremely large \ccfl configurations exceed the threshold and are omitted from the plot.
\begin{figure}[t]
    \centering
    \includegraphics[width=0.48\textwidth, trim=18 9 9 9]{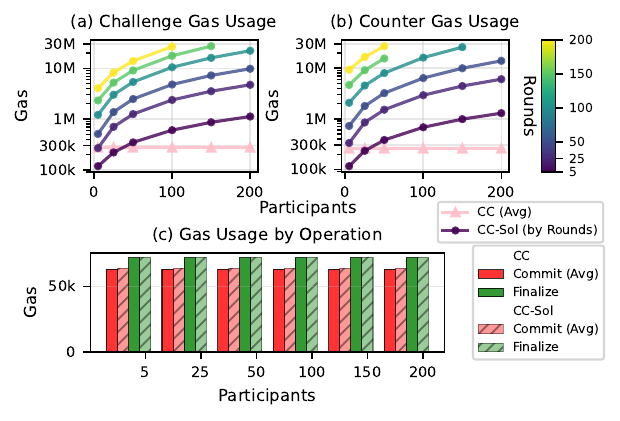}
    \caption{Gas cost of \zkccfl and \ccfl.}
    \Description{}
    \label{fig:challenge-counter-gas-comparison}
    \vspace{-10pt}
\end{figure}

Although \zkccfl incurs a higher base cost due to proof verification, it outperforms \ccfl in most practical settings, especially as the number of participants or training rounds increases.
The non-SNARK variant offers only marginal gas savings in small-scale deployments, such as settings with fewer than 25 participants and fewer than 5 rounds.
In contrast, the gas cost of \ccfl is dynamic and depends on the input size, whereas \zkccfl maintains a constant cost thanks to the fixed overhead of the Groth16 verifier.

These results confirm that SNARK-based offloading yields predictable and scalable cost profiles, making it a more robust choice for large-scale FL deployments.
In addition, the use of SNARK preserves privacy by avoiding on-chain exposure of detailed intermediate states or participant contributions which may benefit future work.
\begin{table*}[t]
    \caption{
        Constraint count and average proof generation time for main circuits across different backends and configurations.
    }
    \label{tab:zk-proof-benchmark}

    \centering
    \small
    \setlength{\tabcolsep}{1mm}
    \begin{tabular}{l|c|c|ccc|c|c|ccc|c|c|ccc}
        \toprule
        & \multicolumn{5}{c|}{\textbf{50 rounds}}
        & \multicolumn{5}{c|}{\textbf{100 rounds}}
        & \multicolumn{5}{c}{\textbf{200 rounds}} \\
        \cmidrule(lr){2-6} \cmidrule(lr){7-11} \cmidrule(lr){12-16}
        \textbf{Circuit} & \textbf{Cnstr.} & \textbf{I-GPU} & \textbf{I-CPU} & \textbf{S-CPU} & \textbf{R-CPU} & \textbf{Cnstr.} & \textbf{I-GPU} & \textbf{I-CPU} & \textbf{S-CPU} & \textbf{R-CPU} & \textbf{Cnstr.} & \textbf{I-GPU} & \textbf{I-CPU} & \textbf{S-CPU} & \textbf{R-CPU} \\
        \midrule
        \opchallenge    & 1.1/1.8 & 0.6/1.1 & 26/41 & 46/79 & 6.5/11 & 2.2/3.5 & 1.2/2.2 & 46/77 & 86/147 & 12/21 & 4.3/6.9 & 2.4/11 & 84/142 & 165/282 & 24/41 \\
        \optransition      & 2.2/3.4 & 1.2/2.2 & 47/79 & 86/146 & 12/21 & 4.3/6.8 & 2.4/11 & 84/142 & 163/286 & 23/40 & 8.4/13.5 & 12/AF & 161/281 & 314/585 & 43/83 \\
        \opdistributeo  & 1.1/1.8 & 0.6/1.1 & 26/41 & 46/78 & 6.6/11 & 2.2/3.5 & 1.2/2.2 & 44/75 & 85/147 & 12/21 & 4.3/6.9 & 2.4/11 & 85/141 & 171/288 & 23/41 \\
        \opdistributem  & 1.1/1.8 & 0.6/1.1 & 26/39 & 46/78 & 6.7/12 & 2.2/3.5 & 1.2/2.1 & 46/75 & 85/148 & 13/22 & 4.3/6.9 & 2.4/11 & 86/144 & 164/281 & 24/41 \\
        \bottomrule

    \end{tabular}

    \vspace{2pt}
    \begin{minipage}{0.95\linewidth}
        \small
        \emph{Note.} Runtime results (\textbf{I-GPU}, \textbf{I-CPU}, \textbf{S-CPU}, \textbf{R-CPU}) are reported in seconds, with each cell showing two values corresponding to runs with 500 and 800 \pars, respectively.
        Results are averaged over 10 runs.
        The \textbf{Cnstr.} column reports the number of non-linear constraints (in millions).
        \\
        \emph{Backends.} \textbf{I-GPU}: \texttt{icicle-snark} (GPU), \textbf{I-CPU}: \texttt{icicle-snark} (CPU), \textbf{S-CPU}: \texttt{snarkjs}, \textbf{R-CPU}: \texttt{rapidsnark}.\\
        \emph{Abbreviations.} \textbf{AF}: allocation failure.
    \end{minipage}

\end{table*}

\subsection{Scalability Evaluation}

Our protocol achieves constant per-round cost for training and commitment, making reward distribution the sole scalability bottleneck.
In this section, we evaluate how far \sys{} can scale in terms of blockchain deployment limits and off-chain proof cost.

\paragraph{Scalability Boundaries.}

We evaluate the system at scale by simulating reward distribution for up to 800 participants.
Ethereum enforces a maximum transaction gas limit of $2^{24}$ and restricts smart contract bytecode to 24KB.
While gas limits can be mitigated via batching, we find that the bytecode size is a stricter bottleneck.
Our deployment tests indicate that batch sizes exceeding 70 participants lead to contract failure regardless of available gas.
Therefore, the multi-shot variant must operate with batch sizes below this threshold to remain deployable.
As shown in Figure~\ref{fig:scalability-bytecode-size}, the bytecode size of the multi-shot variant depends on the per-batch size rather than the total number of participants.
This decoupling ensures the contract remains within the deployable bounds even under large workloads.
\begin{figure}[t]
    \centering
    \includegraphics[width=0.48\textwidth]{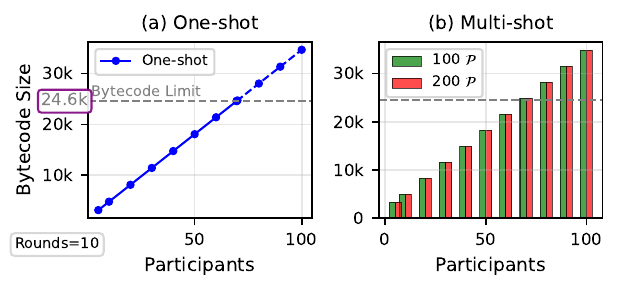}
    \caption{
        Bytecode size comparison between one-shot and multi-shot variants.
        \textnormal{The horizontal dashed line marks the 24KB maximum contract size allowed on Ethereum.}
    }
    \Description{}
    \label{fig:scalability-bytecode-size}
\end{figure}

\begin{figure}[t]
    \centering
    \includegraphics[width=0.48\textwidth]{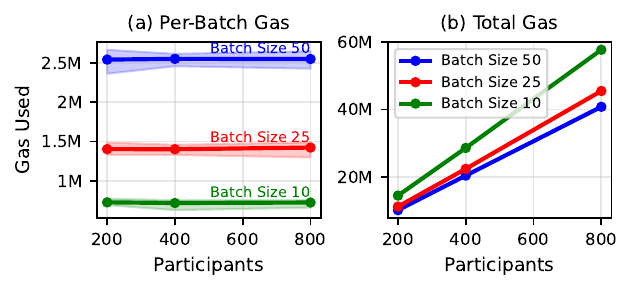}
    \caption{Multi-shot variant scalability up to 800 \pars.
        \textnormal{Per-batch gas is invariant to total \pars, while total gas scales linearly.}
    }
    \Description{Gas cost comparison between one-shot and multi-shot variants across varying batch sizes.}
    \label{fig:scalability-gas-cost}
\end{figure}

\paragraph{System Performance at Scale.}

Complementing the deployment analysis, Figure~\ref{fig:scalability-gas-cost} evaluates the corresponding on-chain gas consumption.
The results confirm the economic viability of the multi-shot variant at scale.
For instance, with a batch size of 50, the system requires 16 on-chain transactions to complete all 800 payouts, consuming less than 41M gas in total.
This configuration fits within Ethereum limits and represents a practical trade-off between contract complexity and the number of on-chain calls.

\paragraph{SNARK Proof Overhead.}

We evaluate the performance of SNARK proof generation across multiple backends, including both GPU- and CPU-based implementations.
For GPU-accelerated benchmarking, we use \texttt{icicle-snark}~\cite{soyturkIciclesnark2025}, tested on a server equipped with eight NVIDIA RTX 4090 GPUs.
Due to current implementation constraints, \texttt{icicle-snark} currently utilizes only a single GPU.
For CPU-based benchmarks, we include \texttt{rapidsnark}~\cite{lomakaRapidsnark2025} (C++), \texttt{snarkjs}~\cite{baylinaSnarkjs2018} (WebAssembly), and the CPU backend of \texttt{icicle-snark}.

Table~\ref{tab:zk-proof-benchmark} reports the average proof generation time over 10 runs for each backend and circuit.
For the configuration with 800 participants and 200 rounds, the largest circuit (used in the counter function) exceeds the available GPU memory under \texttt{icicle-snark}, resulting in an allocation failure (AF) error.
Nonetheless, for smaller configurations (e.g., 500 participants and 200 rounds), GPU-based backends complete proof generation within seconds, confirming the practical feasibility of our design under realistic settings.

\section{Conclusion and Future Work}
\label{sec:conclusion}

In this work, we presented \sys{}, a trustless and scalable reward settlement protocol that reconciles the high-frequency demands of FL with the scalability constraints of permissionless blockchains.
Guided by the \emph{Fundamental \sys{} Inequality}, \sys{} minimizes total economic friction through a dual-variant protocol design backed by a novel domain-specific circuit architecture.
By offering interoperable optimistic and validity-based settlement paths, the protocol flexibly adapts to varying cost-latency trade-offs.  Our deployment on the Ethereum Sepolia testnet confirms that \sys{} effectively breaks the scalability barrier, supporting 800 participants with negligible gas costs.
Future work includes enhancing reward fairness via advanced contribution estimation, comparing with other proving systems such as STARKs or recursive proofs, and leveraging historical incentive data to enhance participant reputation credibility in dynamic FL deployments.

\newpage

\bibliographystyle{ACM-Reference-Format}
\bibliography{ref}


\begin{thebibliography}{47}


\ifx \showCODEN    \undefined \def \showCODEN     #1{\unskip}     \fi
\ifx \showDOI      \undefined \def \showDOI       #1{#1}\fi
\ifx \showISBNx    \undefined \def \showISBNx     #1{\unskip}     \fi
\ifx \showISBNxiii \undefined \def \showISBNxiii  #1{\unskip}     \fi
\ifx \showISSN     \undefined \def \showISSN      #1{\unskip}     \fi
\ifx \showLCCN     \undefined \def \showLCCN      #1{\unskip}     \fi
\ifx \shownote     \undefined \def \shownote      #1{#1}          \fi
\ifx \showarticletitle \undefined \def \showarticletitle #1{#1}   \fi
\ifx \showURL      \undefined \def \showURL       {\relax}        \fi
\providecommand\bibfield[2]{#2}
\providecommand\bibinfo[2]{#2}
\providecommand\natexlab[1]{#1}
\providecommand\showeprint[2][]{arXiv:#2}

\bibitem[Alief et~al\mbox{.}(2023)]%
        {aliefFLB22023}
\bibfield{author}{\bibinfo{person}{Revin~Naufal Alief},
  \bibinfo{person}{Made~Adi Paramartha~Putra}, \bibinfo{person}{Augustin
  Gohil}, \bibinfo{person}{Jae-Min Lee}, {and} \bibinfo{person}{Dong-Seong
  Kim}.} \bibinfo{year}{2023}\natexlab{}.
\newblock \showarticletitle{{{FLB2}}: {{Layer}} 2 {{Blockchain Implementation
  Scheme}} on {{Federated Learning Technique}}}. In
  \bibinfo{booktitle}{\emph{2023 {{International Conference}} on {{Artificial
  Intelligence}} in {{Information}} and {{Communication}} ({{ICAIIC}})}}.
  \bibinfo{pages}{846--850}.
\newblock
\urldef\tempurl%
\url{https://doi.org/10.1109/ICAIIC57133.2023.10067038}
\showDOI{\tempurl}


\bibitem[{arnaucube}(2021)]%
        {arnaucubeKeccak256circom2021}
\bibfield{author}{\bibinfo{person}{{arnaucube}}.}
  \bibinfo{year}{2021}\natexlab{}.
\newblock \bibinfo{title}{Keccak256-Circom}.
\newblock \bibinfo{howpublished}{Vocdoni}.
\newblock
\urldef\tempurl%
\url{https://github.com/vocdoni/keccak256-circom}
\showURL{%
\tempurl}


\bibitem[Baylina and {iden3}(2018)]%
        {baylinaSnarkjs2018}
\bibfield{author}{\bibinfo{person}{Jordi Baylina} {and}
  \bibinfo{person}{{iden3}}.} \bibinfo{year}{2018}\natexlab{}.
\newblock \bibinfo{title}{Snarkjs: {{zkSNARK}} Implementation in {{JavaScript}}
  \& {{WASM}}}.
\newblock
\newblock
\urldef\tempurl%
\url{https://github.com/iden3/snarkjs}
\showURL{%
\tempurl}


\bibitem[{Bell{\'e}s-Mu{\~n}oz} et~al\mbox{.}(2023)]%
        {belles-munozCircom2023}
\bibfield{author}{\bibinfo{person}{Marta {Bell{\'e}s-Mu{\~n}oz}},
  \bibinfo{person}{Miguel Isabel}, \bibinfo{person}{Jose~Luis
  {Mu{\~n}oz-Tapia}}, \bibinfo{person}{Albert Rubio}, {and}
  \bibinfo{person}{Jordi Baylina}.} \bibinfo{year}{2023}\natexlab{}.
\newblock \showarticletitle{Circom: {{A Circuit Description Language}} for
  {{Building Zero-Knowledge Applications}}}.
\newblock \bibinfo{journal}{\emph{IEEE Trans. Dependable Secur. Comput.}}
  \bibinfo{volume}{20}, \bibinfo{number}{6} (\bibinfo{date}{Nov.}
  \bibinfo{year}{2023}), \bibinfo{pages}{4733--4751}.
\newblock
\urldef\tempurl%
\url{https://doi.org/10.1109/TDSC.2022.3232813}
\showDOI{\tempurl}


\bibitem[{Ben-Sasson} et~al\mbox{.}(2013)]%
        {ben-sassonSNARKs2013}
\bibfield{author}{\bibinfo{person}{Eli {Ben-Sasson}},
  \bibinfo{person}{Alessandro Chiesa}, \bibinfo{person}{Daniel Genkin},
  \bibinfo{person}{Eran Tromer}, {and} \bibinfo{person}{Madars Virza}.}
  \bibinfo{year}{2013}\natexlab{}.
\newblock \showarticletitle{{{SNARKs}} for {{C}}: {{Verifying Program
  Executions Succinctly}} and in {{Zero Knowledge}}}. In
  \bibinfo{booktitle}{\emph{Advances in {{Cryptology}} -- {{CRYPTO}} 2013}}.
  \bibinfo{publisher}{Springer}, \bibinfo{address}{Berlin, Heidelberg},
  \bibinfo{pages}{90--108}.
\newblock
\urldef\tempurl%
\url{https://doi.org/10.1007/978-3-642-40084-1_6}
\showDOI{\tempurl}


\bibitem[Bernstein et~al\mbox{.}(2011)]%
        {bernsteinHighSpeed2011}
\bibfield{author}{\bibinfo{person}{Daniel~J. Bernstein}, \bibinfo{person}{Niels
  Duif}, \bibinfo{person}{Tanja Lange}, \bibinfo{person}{Peter Schwabe}, {and}
  \bibinfo{person}{Bo-Yin Yang}.} \bibinfo{year}{2011}\natexlab{}.
\newblock \showarticletitle{High-{{Speed High-Security Signatures}}}. In
  \bibinfo{booktitle}{\emph{Cryptographic {{Hardware}} and {{Embedded Systems}}
  -- {{CHES}} 2011}}. \bibinfo{publisher}{Springer}, \bibinfo{address}{Berlin,
  Heidelberg}, \bibinfo{pages}{124--142}.
\newblock
\urldef\tempurl%
\url{https://doi.org/10.1007/978-3-642-23951-9_9}
\showDOI{\tempurl}


\bibitem[Caldas et~al\mbox{.}(2019)]%
        {caldasLEAF2019}
\bibfield{author}{\bibinfo{person}{Sebastian Caldas}, \bibinfo{person}{Sai
  Meher~Karthik Duddu}, \bibinfo{person}{Peter Wu}, \bibinfo{person}{Tian Li},
  \bibinfo{person}{Jakub Kone{\v c}n{\'y}}, \bibinfo{person}{H.~Brendan
  McMahan}, \bibinfo{person}{Virginia Smith}, {and} \bibinfo{person}{Ameet
  Talwalkar}.} \bibinfo{year}{2019}\natexlab{}.
\newblock \bibinfo{title}{{{LEAF}}: {{A Benchmark}} for {{Federated
  Settings}}}.
\newblock
\newblock
\urldef\tempurl%
\url{https://doi.org/10.48550/arXiv.1812.01097}
\showDOI{\tempurl}


\bibitem[Chen et~al\mbox{.}(2025)]%
        {chenFLock2025}
\bibfield{author}{\bibinfo{person}{Ruonan Chen}, \bibinfo{person}{Ye Dong},
  \bibinfo{person}{Yizhong Liu}, \bibinfo{person}{Tingyu Fan},
  \bibinfo{person}{Dawei Li}, \bibinfo{person}{Zhenyu Guan},
  \bibinfo{person}{Jianwei Liu}, {and} \bibinfo{person}{Jianying Zhou}.}
  \bibinfo{year}{2025}\natexlab{}.
\newblock \showarticletitle{{{FLock}}: {{Robust}} and {{Privacy-Preserving
  Federated Learning}} Based on {{Practical Blockchain State Channels}}}. In
  \bibinfo{booktitle}{\emph{Proceedings of the {{ACM}} on {{Web Conference}}
  2025}}. \bibinfo{publisher}{Association for Computing Machinery},
  \bibinfo{address}{New York, NY, USA}, \bibinfo{pages}{884--895}.
\newblock
\urldef\tempurl%
\url{https://doi.org/10.1145/3696410.3714666}
\showDOI{\tempurl}


\bibitem[Chen et~al\mbox{.}(2024)]%
        {chenContributions2024}
\bibfield{author}{\bibinfo{person}{Yiwei Chen}, \bibinfo{person}{Kaiyu Li},
  \bibinfo{person}{Guoliang Li}, {and} \bibinfo{person}{Yong Wang}.}
  \bibinfo{year}{2024}\natexlab{}.
\newblock \showarticletitle{Contributions {{Estimation}} in {{Federated
  Learning}}: {{A Comprehensive Experimental Evaluation}}}.
\newblock \bibinfo{journal}{\emph{Proceedings of the VLDB Endowment}}
  \bibinfo{volume}{17}, \bibinfo{number}{8} (\bibinfo{date}{April}
  \bibinfo{year}{2024}), \bibinfo{pages}{2077--2090}.
\newblock
\urldef\tempurl%
\url{https://doi.org/10.14778/3659437.3659459}
\showDOI{\tempurl}


\bibitem[{Costa-Gomes} et~al\mbox{.}(2001)]%
        {costa-gomesCognition2001}
\bibfield{author}{\bibinfo{person}{Miguel {Costa-Gomes}},
  \bibinfo{person}{Vincent~P. Crawford}, {and} \bibinfo{person}{Bruno
  Broseta}.} \bibinfo{year}{2001}\natexlab{}.
\newblock \showarticletitle{Cognition and {{Behavior}} in {{Normal-Form
  Games}}: {{An Experimental Study}}}.
\newblock \bibinfo{journal}{\emph{Econometrica}} \bibinfo{volume}{69},
  \bibinfo{number}{5} (\bibinfo{year}{2001}), \bibinfo{pages}{1193--1235}.
\newblock
\urldef\tempurl%
\url{https://doi.org/10.1111/1468-0262.00239}
\showDOI{\tempurl}


\bibitem[Deng(2012)]%
        {dengMNIST2012}
\bibfield{author}{\bibinfo{person}{Li Deng}.} \bibinfo{year}{2012}\natexlab{}.
\newblock \showarticletitle{The {{MNIST Database}} of {{Handwritten Digit
  Images}} for {{Machine Learning Research}} [{{Best}} of the {{Web}}]}.
\newblock \bibinfo{journal}{\emph{IEEE Signal Processing Magazine}}
  \bibinfo{volume}{29}, \bibinfo{number}{6} (\bibinfo{date}{Nov.}
  \bibinfo{year}{2012}), \bibinfo{pages}{141--142}.
\newblock
\urldef\tempurl%
\url{https://doi.org/10.1109/MSP.2012.2211477}
\showDOI{\tempurl}


\bibitem[Desai et~al\mbox{.}(2021)]%
        {desaiBlockFLA2021}
\bibfield{author}{\bibinfo{person}{Harsh~Bimal Desai},
  \bibinfo{person}{Mustafa~Safa Ozdayi}, {and} \bibinfo{person}{Murat
  Kantarcioglu}.} \bibinfo{year}{2021}\natexlab{}.
\newblock \showarticletitle{{{BlockFLA}}: {{Accountable Federated Learning}}
  via {{Hybrid Blockchain Architecture}}}. In
  \bibinfo{booktitle}{\emph{Proceedings of the {{Eleventh ACM Conference}} on
  {{Data}} and {{Application Security}} and {{Privacy}}}}.
  \bibinfo{publisher}{ACM}, \bibinfo{address}{Virtual Event USA},
  \bibinfo{pages}{101--112}.
\newblock
\urldef\tempurl%
\url{https://doi.org/10.1145/3422337.3447837}
\showDOI{\tempurl}


\bibitem[Dif et~al\mbox{.}(2025)]%
        {difAutoDFL2025}
\bibfield{author}{\bibinfo{person}{Meryem~Malak Dif},
  \bibinfo{person}{Mouhamed~Amine Bouchiha}, \bibinfo{person}{Mourad Rabah},
  {and} \bibinfo{person}{Yacine {Ghamri-Doudane}}.}
  \bibinfo{year}{2025}\natexlab{}.
\newblock \bibinfo{title}{{{AutoDFL}}: {{A Scalable}} and {{Automated
  Reputation-Aware Decentralized Federated Learning}}}.
\newblock
\newblock
\urldef\tempurl%
\url{https://doi.org/10.48550/arXiv.2501.04331}
\showDOI{\tempurl}


\bibitem[Dziembowski et~al\mbox{.}(2017)]%
        {dziembowskiPerun2017}
\bibfield{author}{\bibinfo{person}{Stefan Dziembowski}, \bibinfo{person}{Lisa
  Eckey}, \bibinfo{person}{Sebastian Faust}, {and} \bibinfo{person}{Daniel
  Malinowski}.} \bibinfo{year}{2017}\natexlab{}.
\newblock \bibinfo{title}{Perun: {{Virtual Payment Hubs}} over
  {{Cryptocurrencies}}}.
\newblock
\newblock
\urldef\tempurl%
\url{https://eprint.iacr.org/2017/635}
\showURL{%
\tempurl}


\bibitem[Fan et~al\mbox{.}(2021)]%
        {fanHybrid2021}
\bibfield{author}{\bibinfo{person}{Sizheng Fan}, \bibinfo{person}{Hongbo
  Zhang}, \bibinfo{person}{Yuchen Zeng}, {and} \bibinfo{person}{Wei Cai}.}
  \bibinfo{year}{2021}\natexlab{}.
\newblock \showarticletitle{Hybrid {{Blockchain-Based Resource Trading System}}
  for {{Federated Learning}} in {{Edge Computing}}}.
\newblock \bibinfo{journal}{\emph{IEEE Internet of Things Journal}}
  \bibinfo{volume}{8}, \bibinfo{number}{4} (\bibinfo{date}{Feb.}
  \bibinfo{year}{2021}), \bibinfo{pages}{2252--2264}.
\newblock
\urldef\tempurl%
\url{https://doi.org/10.1109/JIOT.2020.3028101}
\showDOI{\tempurl}


\bibitem[Gangwal et~al\mbox{.}(2023)]%
        {gangwalSurvey2023}
\bibfield{author}{\bibinfo{person}{Ankit Gangwal},
  \bibinfo{person}{Haripriya~Ravali Gangavalli}, {and} \bibinfo{person}{Apoorva
  Thirupathi}.} \bibinfo{year}{2023}\natexlab{}.
\newblock \showarticletitle{A Survey of {{Layer-two}} Blockchain Protocols}.
\newblock \bibinfo{journal}{\emph{Journal of Network and Computer
  Applications}}  \bibinfo{volume}{209} (\bibinfo{date}{Jan.}
  \bibinfo{year}{2023}), \bibinfo{pages}{103539}.
\newblock
\urldef\tempurl%
\url{https://doi.org/10.1016/j.jnca.2022.103539}
\showDOI{\tempurl}


\bibitem[Go et~al\mbox{.}(2009)]%
        {goTwitter2009}
\bibfield{author}{\bibinfo{person}{Alec Go}, \bibinfo{person}{Richa Bhayani},
  {and} \bibinfo{person}{Lei Huang}.} \bibinfo{year}{2009}\natexlab{}.
\newblock \bibinfo{booktitle}{\emph{Twitter {{Sentiment Classification}} Using
  {{Distant Supervision}}}}.
\newblock \bibinfo{type}{{T}echnical {R}eport}. \bibinfo{address}{CS224N
  project report, Stanford}.
\newblock


\bibitem[Goodfellow et~al\mbox{.}(2016)]%
        {goodfellowDeep2016}
\bibfield{author}{\bibinfo{person}{Ian Goodfellow}, \bibinfo{person}{Yoshua
  Bengio}, {and} \bibinfo{person}{Aaron Courville}.}
  \bibinfo{year}{2016}\natexlab{}.
\newblock \bibinfo{booktitle}{\emph{Deep Learning}}.
\newblock \bibinfo{publisher}{MIT Press}.
\newblock


\bibitem[Grassi et~al\mbox{.}(2021)]%
        {grassiPoseidon2021}
\bibfield{author}{\bibinfo{person}{Lorenzo Grassi}, \bibinfo{person}{Dmitry
  Khovratovich}, \bibinfo{person}{Christian Rechberger}, \bibinfo{person}{Arnab
  Roy}, {and} \bibinfo{person}{Markus Schofnegger}.}
  \bibinfo{year}{2021}\natexlab{}.
\newblock \showarticletitle{Poseidon: {{A New Hash Function}} for
  \textbraceleft{{Zero-Knowledge}}\textbraceright{} {{Proof Systems}}}. In
  \bibinfo{booktitle}{\emph{30th {{USENIX Security Symposium}} ({{USENIX
  Security}} 21)}}. \bibinfo{pages}{519--535}.
\newblock
\urldef\tempurl%
\url{https://www.usenix.org/conference/usenixsecurity21/presentation/grassi}
\showURL{%
\tempurl}


\bibitem[Groth(2016)]%
        {grothSize2016}
\bibfield{author}{\bibinfo{person}{Jens Groth}.}
  \bibinfo{year}{2016}\natexlab{}.
\newblock \bibinfo{title}{On the {{Size}} of {{Pairing-based Non-interactive
  Arguments}}}.
\newblock
\newblock
\urldef\tempurl%
\url{https://eprint.iacr.org/2016/260}
\showURL{%
\tempurl}


\bibitem[Hallaji et~al\mbox{.}(2024)]%
        {hallajiDecentralized2024}
\bibfield{author}{\bibinfo{person}{Ehsan Hallaji}, \bibinfo{person}{Roozbeh
  {Razavi-Far}}, \bibinfo{person}{Mehrdad Saif}, \bibinfo{person}{Boyu Wang},
  {and} \bibinfo{person}{Qiang Yang}.} \bibinfo{year}{2024}\natexlab{}.
\newblock \showarticletitle{Decentralized {{Federated Learning}}: {{A Survey}}
  on {{Security}} and {{Privacy}}}.
\newblock \bibinfo{journal}{\emph{IEEE Transactions on Big Data}}
  \bibinfo{volume}{10}, \bibinfo{number}{2} (\bibinfo{date}{April}
  \bibinfo{year}{2024}), \bibinfo{pages}{194--213}.
\newblock
\urldef\tempurl%
\url{https://doi.org/10.1109/TBDATA.2024.3362191}
\showDOI{\tempurl}


\bibitem[Han et~al\mbox{.}(2022)]%
        {hanHow2022}
\bibfield{author}{\bibinfo{person}{Rong Han}, \bibinfo{person}{Zheng Yan},
  \bibinfo{person}{Xueqin Liang}, {and} \bibinfo{person}{Laurence~T. Yang}.}
  \bibinfo{year}{2022}\natexlab{}.
\newblock \showarticletitle{How {{Can Incentive Mechanisms}} and {{Blockchain
  Benefit}} with {{Each Other}}? {{A Survey}}}.
\newblock \bibinfo{journal}{\emph{ACM Comput. Surv.}} \bibinfo{volume}{55},
  \bibinfo{number}{7} (\bibinfo{date}{Dec.} \bibinfo{year}{2022}),
  \bibinfo{pages}{136:1--136:38}.
\newblock
\urldef\tempurl%
\url{https://doi.org/10.1145/3539604}
\showDOI{\tempurl}


\bibitem[He et~al\mbox{.}(2016)]%
        {heDeep2016}
\bibfield{author}{\bibinfo{person}{Kaiming He}, \bibinfo{person}{Xiangyu
  Zhang}, \bibinfo{person}{Shaoqing Ren}, {and} \bibinfo{person}{Jian Sun}.}
  \bibinfo{year}{2016}\natexlab{}.
\newblock \showarticletitle{Deep {{Residual Learning}} for {{Image
  Recognition}}}. In \bibinfo{booktitle}{\emph{2016 {{IEEE Conference}} on
  {{Computer Vision}} and {{Pattern Recognition}} ({{CVPR}})}}.
  \bibinfo{pages}{770--778}.
\newblock
\urldef\tempurl%
\url{https://doi.org/10.1109/CVPR.2016.90}
\showDOI{\tempurl}


\bibitem[Hossain et~al\mbox{.}(2025)]%
        {hossainAdRoFL2025}
\bibfield{author}{\bibinfo{person}{Md~Kamrul Hossain}, \bibinfo{person}{Walid
  Aljoby}, \bibinfo{person}{Anis Elgabli}, \bibinfo{person}{Ahmed~M.
  Abdelmoniem}, {and} \bibinfo{person}{Khaled~A. Harras}.}
  \bibinfo{year}{2025}\natexlab{}.
\newblock \bibinfo{title}{{{AdRo-FL}}: {{Informed}} and {{Secure Client
  Selection}} for {{Federated Learning}} in the {{Presence}} of {{Adversarial
  Aggregator}}}.
\newblock
\newblock
\urldef\tempurl%
\url{https://doi.org/10.48550/arXiv.2506.17805}
\showDOI{\tempurl}


\bibitem[Huang et~al\mbox{.}(2024)]%
        {huangFederated2024}
\bibfield{author}{\bibinfo{person}{Wenke Huang}, \bibinfo{person}{Mang Ye},
  \bibinfo{person}{Zekun Shi}, \bibinfo{person}{Guancheng Wan},
  \bibinfo{person}{He Li}, \bibinfo{person}{Bo Du}, {and}
  \bibinfo{person}{Qiang Yang}.} \bibinfo{year}{2024}\natexlab{}.
\newblock \showarticletitle{Federated {{Learning}} for {{Generalization}},
  {{Robustness}}, {{Fairness}}: {{A Survey}} and {{Benchmark}}}.
\newblock \bibinfo{journal}{\emph{IEEE Transactions on Pattern Analysis and
  Machine Intelligence}} \bibinfo{volume}{46}, \bibinfo{number}{12}
  (\bibinfo{date}{Dec.} \bibinfo{year}{2024}), \bibinfo{pages}{9387--9406}.
\newblock
\urldef\tempurl%
\url{https://doi.org/10.1109/TPAMI.2024.3418862}
\showDOI{\tempurl}


\bibitem[Iacoban(2024)]%
        {iacobanStateFL2024}
\bibfield{author}{\bibinfo{person}{Ioana~Paula Iacoban}.}
  \bibinfo{year}{2024}\natexlab{}.
\newblock \emph{\bibinfo{title}{{{StateFL}}: {{State Channels Powered Federated
  Learning}}}}.
\newblock \bibinfo{thesistype}{Master's\ thesis}. \bibinfo{school}{Delft
  University of Technology}.
\newblock


\bibitem[Joulin et~al\mbox{.}(2017)]%
        {joulinBag2017}
\bibfield{author}{\bibinfo{person}{Armand Joulin}, \bibinfo{person}{Edouard
  Grave}, \bibinfo{person}{Piotr Bojanowski}, {and} \bibinfo{person}{Tomas
  Mikolov}.} \bibinfo{year}{2017}\natexlab{}.
\newblock \showarticletitle{Bag of {{Tricks}} for {{Efficient Text
  Classification}}}. In \bibinfo{booktitle}{\emph{Proceedings of the 15th
  {{Conference}} of the {{European Chapter}} of the {{Association}} for
  {{Computational Linguistics}}: {{Volume}} 2, {{Short Papers}}}}.
  \bibinfo{publisher}{Association for Computational Linguistics},
  \bibinfo{address}{Valencia, Spain}, \bibinfo{pages}{427--431}.
\newblock
\urldef\tempurl%
\url{https://aclanthology.org/E17-2068/}
\showURL{%
\tempurl}


\bibitem[Kipf and Welling(2017)]%
        {kipfSemiSupervised2017}
\bibfield{author}{\bibinfo{person}{Thomas~N. Kipf} {and} \bibinfo{person}{Max
  Welling}.} \bibinfo{year}{2017}\natexlab{}.
\newblock \showarticletitle{Semi-{{Supervised Classification}} with {{Graph
  Convolutional Networks}}}. In \bibinfo{booktitle}{\emph{International
  {{Conference}} on {{Learning Representations}}}}.
\newblock
\urldef\tempurl%
\url{https://openreview.net/forum?id=SJU4ayYgl}
\showURL{%
\tempurl}


\bibitem[Krizhevsky(2009)]%
        {krizhevskyLearning2009}
\bibfield{author}{\bibinfo{person}{Alex Krizhevsky}.}
  \bibinfo{year}{2009}\natexlab{}.
\newblock \showarticletitle{Learning {{Multiple Layers}} of {{Features}} from
  {{Tiny Images}}}.
\newblock  (\bibinfo{year}{2009}).
\newblock


\bibitem[Lecun et~al\mbox{.}(1998)]%
        {lecunGradientbased1998}
\bibfield{author}{\bibinfo{person}{Y. Lecun}, \bibinfo{person}{L. Bottou},
  \bibinfo{person}{Y. Bengio}, {and} \bibinfo{person}{P. Haffner}.}
  \bibinfo{year}{1998}\natexlab{}.
\newblock \showarticletitle{Gradient-Based Learning Applied to Document
  Recognition}.
\newblock \bibinfo{journal}{\emph{Proc. IEEE}} \bibinfo{volume}{86},
  \bibinfo{number}{11} (\bibinfo{date}{Nov.} \bibinfo{year}{1998}),
  \bibinfo{pages}{2278--2324}.
\newblock
\urldef\tempurl%
\url{https://doi.org/10.1109/5.726791}
\showDOI{\tempurl}


\bibitem[Liu et~al\mbox{.}(2024)]%
        {liuEnhancing2024}
\bibfield{author}{\bibinfo{person}{Ji Liu}, \bibinfo{person}{Chunlu Chen},
  \bibinfo{person}{Yu Li}, \bibinfo{person}{Lin Sun}, \bibinfo{person}{Yulun
  Song}, \bibinfo{person}{Jingbo Zhou}, \bibinfo{person}{Bo Jing}, {and}
  \bibinfo{person}{Dejing Dou}.} \bibinfo{year}{2024}\natexlab{}.
\newblock \showarticletitle{Enhancing Trust and Privacy in Distributed
  Networks: A Comprehensive Survey on Blockchain-Based Federated Learning}.
\newblock \bibinfo{journal}{\emph{Knowledge and Information Systems}}
  \bibinfo{volume}{66}, \bibinfo{number}{8} (\bibinfo{date}{Aug.}
  \bibinfo{year}{2024}), \bibinfo{pages}{4377--4403}.
\newblock
\urldef\tempurl%
\url{https://doi.org/10.1007/s10115-024-02117-3}
\showDOI{\tempurl}


\bibitem[Lomaka et~al\mbox{.}(2025)]%
        {lomakaRapidsnark2025}
\bibfield{author}{\bibinfo{person}{Oleh Lomaka}, \bibinfo{person}{Oleksandr
  Brezhniev}, {and} \bibinfo{person}{Jordi Baylina}.}
  \bibinfo{year}{2025}\natexlab{}.
\newblock \bibinfo{title}{Rapidsnark}.
\newblock \bibinfo{howpublished}{iden3}.
\newblock
\urldef\tempurl%
\url{https://github.com/iden3/rapidsnark}
\showURL{%
\tempurl}


\bibitem[Lyu et~al\mbox{.}(2020)]%
        {lyuCollaborative2020}
\bibfield{author}{\bibinfo{person}{Lingjuan Lyu}, \bibinfo{person}{Xinyi Xu},
  \bibinfo{person}{Qian Wang}, {and} \bibinfo{person}{Han Yu}.}
  \bibinfo{year}{2020}\natexlab{}.
\newblock \showarticletitle{Collaborative {{Fairness}} in {{Federated
  Learning}}}.
\newblock In \bibinfo{booktitle}{\emph{Federated {{Learning}}: {{Privacy}} and
  {{Incentive}}}}. \bibinfo{publisher}{Springer International Publishing},
  \bibinfo{address}{Cham}, \bibinfo{pages}{189--204}.
\newblock
\urldef\tempurl%
\url{https://doi.org/10.1007/978-3-030-63076-8_14}
\showDOI{\tempurl}


\bibitem[McCallum et~al\mbox{.}(2000)]%
        {mccallumAutomating2000}
\bibfield{author}{\bibinfo{person}{Andrew~Kachites McCallum},
  \bibinfo{person}{Kamal Nigam}, \bibinfo{person}{Jason Rennie}, {and}
  \bibinfo{person}{Kristie Seymore}.} \bibinfo{year}{2000}\natexlab{}.
\newblock \showarticletitle{Automating the {{Construction}} of {{Internet
  Portals}} with {{Machine Learning}}}.
\newblock \bibinfo{journal}{\emph{Information Retrieval}} \bibinfo{volume}{3},
  \bibinfo{number}{2} (\bibinfo{date}{July} \bibinfo{year}{2000}),
  \bibinfo{pages}{127--163}.
\newblock
\urldef\tempurl%
\url{https://doi.org/10.1023/A:1009953814988}
\showDOI{\tempurl}


\bibitem[Moro et~al\mbox{.}(2014)]%
        {moroDatadriven2014}
\bibfield{author}{\bibinfo{person}{S{\'e}rgio Moro}, \bibinfo{person}{Paulo
  Cortez}, {and} \bibinfo{person}{Paulo Rita}.}
  \bibinfo{year}{2014}\natexlab{}.
\newblock \showarticletitle{A Data-Driven Approach to Predict the Success of
  Bank Telemarketing}.
\newblock \bibinfo{journal}{\emph{Decision Support Systems}}
  \bibinfo{volume}{62} (\bibinfo{date}{June} \bibinfo{year}{2014}),
  \bibinfo{pages}{22--31}.
\newblock
\urldef\tempurl%
\url{https://doi.org/10.1016/j.dss.2014.03.001}
\showDOI{\tempurl}


\bibitem[Myerson(1992)]%
        {myersonValue1992}
\bibfield{author}{\bibinfo{person}{Roger~B. Myerson}.}
  \bibinfo{year}{1992}\natexlab{}.
\newblock \showarticletitle{On the {{Value}} of {{Game Theory}} in {{Social
  Science}}}.
\newblock \bibinfo{journal}{\emph{Rationality and Society}}
  \bibinfo{volume}{4}, \bibinfo{number}{1} (\bibinfo{year}{1992}),
  \bibinfo{pages}{62--73}.
\newblock
\urldef\tempurl%
\url{https://ideas.repec.org//a/sae/ratsoc/v4y1992i1p62-73.html}
\showURL{%
\tempurl}


\bibitem[Nakamoto(2008)]%
        {nakamotoBitcoin2008}
\bibfield{author}{\bibinfo{person}{Satoshi Nakamoto}.}
  \bibinfo{year}{2008}\natexlab{}.
\newblock \showarticletitle{Bitcoin: {{A Peer-to-Peer Electronic Cash
  System}}}.
\newblock  (\bibinfo{year}{2008}), \bibinfo{pages}{9}.
\newblock
\urldef\tempurl%
\url{https://bitcoin.org/bitcoin.pdf}
\showURL{%
\tempurl}


\bibitem[Reno and Roy(2025)]%
        {renoNavigating2025}
\bibfield{author}{\bibinfo{person}{Saha Reno} {and} \bibinfo{person}{Koushik
  Roy}.} \bibinfo{year}{2025}\natexlab{}.
\newblock \showarticletitle{Navigating the {{Blockchain Trilemma}}: {{A
  Review}} of {{Recent Advances}} and {{Emerging Solutions}} in
  {{Decentralization}}, {{Security}}, and {{Scalability Optimization}}}.
\newblock \bibinfo{journal}{\emph{Computers, Materials \& Continua}}
  \bibinfo{volume}{84}, \bibinfo{number}{2} (\bibinfo{year}{2025}),
  \bibinfo{pages}{2061--2119}.
\newblock
\urldef\tempurl%
\url{https://doi.org/10.32604/cmc.2025.066366}
\showDOI{\tempurl}


\bibitem[Sguanci et~al\mbox{.}(2021)]%
        {sguanciLayer2021}
\bibfield{author}{\bibinfo{person}{Cosimo Sguanci}, \bibinfo{person}{Roberto
  Spatafora}, {and} \bibinfo{person}{Andrea~Mario Vergani}.}
  \bibinfo{year}{2021}\natexlab{}.
\newblock \bibinfo{title}{Layer 2 {{Blockchain Scaling}}: A {{Survey}}}.
\newblock
\newblock
\urldef\tempurl%
\url{https://doi.org/10.48550/arXiv.2107.10881}
\showDOI{\tempurl}


\bibitem[Smilkov et~al\mbox{.}(2019)]%
        {smilkovTensorFlowjs2019}
\bibfield{author}{\bibinfo{person}{Daniel Smilkov}, \bibinfo{person}{Nikhil
  Thorat}, \bibinfo{person}{Yannick Assogba}, \bibinfo{person}{Ann Yuan},
  \bibinfo{person}{Nick Kreeger}, \bibinfo{person}{Ping Yu},
  \bibinfo{person}{Kangyi Zhang}, \bibinfo{person}{Shanqing Cai},
  \bibinfo{person}{Eric Nielsen}, \bibinfo{person}{David Soergel},
  \bibinfo{person}{Stan Bileschi}, \bibinfo{person}{Michael Terry},
  \bibinfo{person}{Charles Nicholson}, \bibinfo{person}{Sandeep~N. Gupta},
  \bibinfo{person}{Sarah Sirajuddin}, \bibinfo{person}{D. Sculley},
  \bibinfo{person}{Rajat Monga}, \bibinfo{person}{Greg Corrado},
  \bibinfo{person}{Fernanda~B. Vi{\'e}gas}, {and} \bibinfo{person}{Martin
  Wattenberg}.} \bibinfo{year}{2019}\natexlab{}.
\newblock \bibinfo{title}{{{TensorFlow}}.Js: {{Machine Learning}} for the
  {{Web}} and {{Beyond}}}.
\newblock
\newblock
\urldef\tempurl%
\url{https://doi.org/10.48550/arXiv.1901.05350}
\showDOI{\tempurl}


\bibitem[Soyturk(2025)]%
        {soyturkIciclesnark2025}
\bibfield{author}{\bibinfo{person}{Emir Soyturk}.}
  \bibinfo{year}{2025}\natexlab{}.
\newblock \bibinfo{title}{Icicle-Snark}.
\newblock \bibinfo{howpublished}{Ingonyama}.
\newblock
\urldef\tempurl%
\url{https://github.com/ingonyama-zk/icicle-snark}
\showURL{%
\tempurl}


\bibitem[Wahrst{\"a}tter et~al\mbox{.}(2024)]%
        {wahrstatterOpenFL2024}
\bibfield{author}{\bibinfo{person}{Anton Wahrst{\"a}tter},
  \bibinfo{person}{Sajjad Khan}, {and} \bibinfo{person}{Davor Svetinovic}.}
  \bibinfo{year}{2024}\natexlab{}.
\newblock \showarticletitle{{{OpenFL}}: {{A}} Scalable and Secure Decentralized
  Federated Learning System on the {{Ethereum}} Blockchain}.
\newblock \bibinfo{journal}{\emph{Internet of Things}}  \bibinfo{volume}{26}
  (\bibinfo{date}{July} \bibinfo{year}{2024}), \bibinfo{pages}{101174}.
\newblock
\urldef\tempurl%
\url{https://doi.org/10.1016/j.iot.2024.101174}
\showDOI{\tempurl}


\bibitem[Wang et~al\mbox{.}(2020)]%
        {wangPrincipled2020}
\bibfield{author}{\bibinfo{person}{Tianhao Wang}, \bibinfo{person}{Johannes
  Rausch}, \bibinfo{person}{Ce Zhang}, \bibinfo{person}{Ruoxi Jia}, {and}
  \bibinfo{person}{Dawn Song}.} \bibinfo{year}{2020}\natexlab{}.
\newblock \showarticletitle{A {{Principled Approach}} to {{Data Valuation}} for
  {{Federated Learning}}}.
\newblock In \bibinfo{booktitle}{\emph{Federated {{Learning}}: {{Privacy}} and
  {{Incentive}}}}. \bibinfo{publisher}{Springer International Publishing},
  \bibinfo{address}{Cham}, \bibinfo{pages}{153--167}.
\newblock
\urldef\tempurl%
\url{https://doi.org/10.1007/978-3-030-63076-8_11}
\showDOI{\tempurl}


\bibitem[Weng et~al\mbox{.}(2021)]%
        {wengDeepChain2021}
\bibfield{author}{\bibinfo{person}{Jiasi Weng}, \bibinfo{person}{Jian Weng},
  \bibinfo{person}{Jilian Zhang}, \bibinfo{person}{Ming Li},
  \bibinfo{person}{Yue Zhang}, {and} \bibinfo{person}{Weiqi Luo}.}
  \bibinfo{year}{2021}\natexlab{}.
\newblock \showarticletitle{{{DeepChain}}: {{Auditable}} and
  {{Privacy-Preserving Deep Learning}} with {{Blockchain-Based Incentive}}}.
\newblock \bibinfo{journal}{\emph{IEEE Transactions on Dependable and Secure
  Computing}} \bibinfo{volume}{18}, \bibinfo{number}{5} (\bibinfo{date}{Sept.}
  \bibinfo{year}{2021}), \bibinfo{pages}{2438--2455}.
\newblock
\urldef\tempurl%
\url{https://doi.org/10.1109/TDSC.2019.2952332}
\showDOI{\tempurl}


\bibitem[Xiao et~al\mbox{.}(2017)]%
        {xiaoFashionMNIST2017}
\bibfield{author}{\bibinfo{person}{Han Xiao}, \bibinfo{person}{Kashif Rasul},
  {and} \bibinfo{person}{Roland Vollgraf}.} \bibinfo{year}{2017}\natexlab{}.
\newblock \bibinfo{title}{Fashion-{{MNIST}}: A {{Novel Image Dataset}} for
  {{Benchmarking Machine Learning Algorithms}}}.
\newblock
\newblock
\urldef\tempurl%
\url{https://doi.org/10.48550/arXiv.1708.07747}
\showDOI{\tempurl}


\bibitem[Xing et~al\mbox{.}(2026)]%
        {xingZeroKnowledge2026}
\bibfield{author}{\bibinfo{person}{Zhibo Xing}, \bibinfo{person}{Zijian Zhang},
  \bibinfo{person}{Ziang Zhang}, \bibinfo{person}{Zhen Li},
  \bibinfo{person}{Meng Li}, \bibinfo{person}{Jiamou Liu},
  \bibinfo{person}{Zongyang Zhang}, \bibinfo{person}{Yi Zhao},
  \bibinfo{person}{Qi Sun}, \bibinfo{person}{Liehuang Zhu}, {and}
  \bibinfo{person}{Giovanni Russello}.} \bibinfo{year}{2026}\natexlab{}.
\newblock \showarticletitle{Zero-{{Knowledge Proof-based Verifiable
  Decentralized Machine Learning}} in {{Communication Network}}: {{A
  Comprehensive Survey}}}.
\newblock \bibinfo{journal}{\emph{IEEE Communications Surveys \& Tutorials}}
  \bibinfo{volume}{28} (\bibinfo{year}{2026}), \bibinfo{pages}{985--1024}.
\newblock
\urldef\tempurl%
\url{https://doi.org/10.1109/COMST.2025.3561657}
\showDOI{\tempurl}


\bibitem[Xu et~al\mbox{.}(2021)]%
        {xuBAFL2021}
\bibfield{author}{\bibinfo{person}{Chenhao Xu}, \bibinfo{person}{Youyang Qu},
  \bibinfo{person}{Peter~W. Eklund}, \bibinfo{person}{Yong Xiang}, {and}
  \bibinfo{person}{Longxiang Gao}.} \bibinfo{year}{2021}\natexlab{}.
\newblock \showarticletitle{{{BAFL}}: {{An Efficient Blockchain-Based
  Asynchronous Federated Learning Framework}}}. In
  \bibinfo{booktitle}{\emph{2021 {{IEEE Symposium}} on {{Computers}} and
  {{Communications}} ({{ISCC}})}}. \bibinfo{pages}{1--6}.
\newblock
\urldef\tempurl%
\url{https://doi.org/10.1109/ISCC53001.2021.9631405}
\showDOI{\tempurl}


\end{thebibliography}

\appendix
\section{Circuit Gadgets}
\label{appendix:verifier_gadgets}
\begin{algorithm}[ht]
    \caption{Gadgets}\label{alg:gadgets}
    \small

    \SetKwProg{func}{gadget}{}{}

    \func{$\gadgetname{LPI}$($x$) $\to (\vec{y})$ \CommentFunction{$\vec{y} = (\{1\}^x, \{0\}^{T-x})$}} {
        $\forall i \in [T],\quad y_i \gets \mathbf{1}[i<x]$\;

        \constrain{\sum_{i=0}^{T-1} y_i = x}
        \Comment*[r]{sum equals the integer}

        $\forall i \in [T]$,
        \constrain{y_i\cdot(y_i-1) = 0}
        \Comment*[r]{booleanity}

        $\forall i \in [T+1]$,
        \constrain{c_i \gets
            \begin{cases}
                1 - y_i & \text{if }i=0\\
                y_{i-1} & \text{if }i=T\\
                y_{i-1} - y_i & \text{otherwise}
            \end{cases}
        }\;

        \constrain{\sum_{i=0}^{T} c_i = 1}
        \Comment*[r]{exactly one $1\!\to\!0$ transition}
    }

    \func{$\gadgetname{IsZero}$($x$) $\to (y)$ \CommentFunction{$y = \mathbf{1}[x=0]$}}  {
        $\mathit{inv} \gets (x = 0\text{ ? }0 : x^{-1})$\;

        \constrain{x\cdot \mathit{inv} = 1 - y},
        \constrain{x\cdot y = 0}\;
    }

    \func{$H_{M}(\vec{x}, \salt)$ $\to (y)$ \CommentFunction{Modular hash}} {
        $k \gets \left\lceil \frac{N}{\kappa-1} \right\rceil$, $y \gets \salt$\;
        \For{$i \gets 0$ \textbf{to} $k-1$}{
            $m \gets \min(N- i\cdot (\kappa-1) , \kappa-1)$\;

            $\forall j \in [m]$,
            \constrain{s_j \gets x_{i\cdot (\kappa-1) + j}}\;

            \constrain{y \gets H_{m+1} (y, s_0,\dots,s_{m-1})}\;
        }
    }

    \func{$\gadgetname{BatchExtractor}$($\vec{x}, a$) $\to (\vec{y})$ \CommentFunction{$\vec{y} = \{x_{a B}, \ldots, x_{a B + B - 1}\}$}} {

        $t \gets \lceil N/B \rceil$\;

        $\forall i \in [t],\quad$ \constrain{e_i \gets \textsc{IsZero}(a-i)}\;

        \For{$k \gets 0$ \textbf{to} $t \cdot B-1$} {
            $i \gets \lfloor k/B \rfloor$, \quad $j \gets k \bmod B$\;

            \constrain{S[i][j] \gets
                \begin{cases}
                    x_k\cdot e_i & (k<N)\\
                    0 & (\text{otherwise})
                \end{cases}
            }\;
        }

        $\forall j \in [B]$,
        \constrain{y_j \gets \sum_{i=0}^{t-1} S[i][j]}\;
    }

    \func{$\gadgetname{SumRow}$($\bm{X},\vec{r}$) $\to (\vec{y})$} {
        $\forall i \in [n],\quad$
        \constrain{y_i \gets \sum_{t=0}^{T-1} \bm{X}[i][t] \cdot r_t}\;
    }

    \func{$\gadgetname{MaskCheck}$($\bm{X},\vec{r}$)} {
        $\forall (i,t) \in [n]\times[T],\quad$
        \constrain{\bm{X}[i][t] \cdot (1-r_t) = 0}\;
    }

    \func{$H_\text{PK}$($\mathbf{A}$) $\to (\hagg)$} {
        \constrain{\hagg \gets H_2(\mathbf{A}_x,\mathbf{A}_y)}\;
    }

    \func{$\gadgetname{VerifyCommitment}$($\bm{V},\pars,\salt,\mathbf{A},\sigma_C$) $\to (C)$} {
        \constrain{C \gets H_3(H_{M}(\text{vec}(\bm{V})),\;
        H_{M}(\pars),\;\salt)}\;

        \constrain{\text{EdDSA.verify}(\mathbf{A},\sigma_C,C) = 1}\;
    }

\end{algorithm}

\newpage

\section{Main Verifiers}
\label{appendix:verifier_circuits}
\begin{algorithm}[ht]
    \caption{Main Verifier Circuits}\label{alg:verifiers}
    \small

    \SetKwProg{func}{circuit}{}{}

    \func{$\verify{transition}$($\pubpar{r},\bm{V}_2,\vec{p}_2,p_1,\salt,\mathbf{A},\sigma_{C_1},\sigma_{C_2}$) $\to (\pubpar{C_1},\pubpar{C_2},\pubpar{\hagg})$} {
        \constrain{\vec{r}_2 \gets \gadgetname{LPI}(\pubpar{r}+1)}\;
        \constrain{\gadgetname{MaskCheck}(\vec{r}_2,\bm{V}_2)}
        \Comment*[r]{Verify $\bm{V}_2$ belongs to round $\pubpar{r}+1$}
        \BlankLine

        \constrain{\vec{r}_1 \gets \gadgetname{LPI}(\pubpar{r})}
        \CommentDown*[r]{Derive $\bm{V}_1$ from $\bm{V}_2$ (round $\pubpar{r}$ slice)}

        $\forall (i,t) \in [n]\times[T]$,
        \constrain{\bm{V}_1[i][t] \gets \bm{V}_2[i][t] \cdot \vec{r}_1[t]}\;
        \BlankLine

        \constrain{\vec{r}_{p_1} \gets \gadgetname{LPI}(p_1)}\;
        $\forall i \in [n]$, \constrain{\vec{p}_1[i] \gets \vec{p}_2[i] \cdot \vec{r}_{p_1}[i]}
        \Comment*[r]{Derive $\vec{p}_1$ from $\vec{p}_2$}
        \BlankLine

        \constrain{\pubpar{\hagg} \gets H_\text{PK}(\mathbf{A})}\;

        $\forall i \in \{1,2\}$, \constrain{\pubpar{C_i} \gets \gadgetname{VerifyCommitment}(\bm{V}_i, \vec{p}_i, \salt, \mathbf{A}, \sigma_{C_i})}\;
    }

    \func{$\verify{challenge}$($\pubpar{r},\bm{V},\vec{p},\salt,\mathbf{A},\sigma_{C}$) $\to (\pubpar{C},\pubpar{\hagg})$} {
        \constrain{\vec{r} \gets \gadgetname{LPI}(\pubpar{r})}\;
        \constrain{\gadgetname{MaskCheck}(\vec{r},\bm{V})}
        \Comment*[r]{Verify $\bm{V}$ belongs to round $\pubpar{r}+1$}
        \BlankLine

        \constrain{\pubpar{\hagg} \gets H_\text{PK}(\mathbf{A})}\;

        \constrain{\pubpar{C} \gets \gadgetname{VerifyCommitment}(\bm{V}, \vec{p}, \salt, \mathbf{A}, \sigma_{C})}\;
    }

    \func{$\verify{distribute}$($\pubpar{r},\pubpar{b},\bm{V},\vec{p},\salt,\mathbf{A},\sigma_{C}$) $\to (\pubpar{C},\pubpar{\hagg}, \pubpar{\vec{s}_p}, \pubpar{\vec{p}_p})$} {
        \constrain{\vec{r} \gets \gadgetname{LPI}(\pubpar{r})}\;
        \constrain{\gadgetname{MaskCheck}(\vec{r},\bm{V})}
        \Comment*[r]{Verify $\bm{V}$ belongs to round $\pubpar{r}+1$}
        \BlankLine

        \constrain{\pubpar{\vec{s}} \gets \gadgetname{SumRow}(\bm{V},\vec{r})}\;
        \constrain{\pubpar{\vec{p}_p} \gets \gadgetname{BatchExtractor}(\vec{p},\pubpar{b})}\;
        \constrain{\pubpar{\vec{s}_p} \gets \gadgetname{BatchExtractor}(\vec{s},\pubpar{b})}\;
        \BlankLine

        \constrain{\pubpar{\hagg} \gets H_\text{PK}(\mathbf{A})}\;

        \constrain{\pubpar{C} \gets \gadgetname{VerifyCommitment}(\bm{V}, \vec{p}, \salt, \mathbf{A}, \sigma_{C})}\;
    }

\end{algorithm}

\section{Incentive Compatibility}
\label{appendix:incentive}

We formalize the incentive compatibility of our protocol by modeling the interaction as a static complete-information game.
Each player selects a strategy to maximize its utility, and we show that honest behavior constitutes a Nash equilibrium~\cite{myersonValue1992} under rational assumptions.

\paragraph{Game Model.}

We define a one-shot normal-form game~\cite{costa-gomesCognition2001}:
$$
\mathcal{G} = (M, \{\mathcal{S}_i\}_{i \in M}, \{u_i\}_{i \in M}),
$$
where $M = \{\agg, P_1, \dots, P_N\}$ is the set of all players, consisting of the aggregator $\agg$ and participants $P_1, \dots, P_N$.
Each player $i \in M$ chooses a strategy $s_i \in \mathcal{S}_i$ to maximize its utility $u_i(s_i, s_{-i})$.

Although the overall protocol may consist of multiple training rounds, we model the interaction as a one-shot static game because the first occurrence of malicious behavior (e.g., incorrect commitment or false challenge) triggers an immediate resolution phase involving dispute handling, slashing, and final reward distribution.
This effectively terminates the ongoing round-based process, making the game finite and strategically equivalent to a single-shot interaction.

Furthermore, as participants interact only with the aggregator and not with each other, their incentives can be analyzed independently.
The strategy of each participant depends only on the aggregator's behavior and does not influence or depend on the actions of other participants.

\emph{Aggregator strategy space $\mathcal{S}_{\agg}$:}
\begin{itemize}
    \item $s_{\agg}^{\text{honest}}$: Commit correct rewards and finalize distribution.
    \item $s_{\agg}^{\text{tamper}}$: Publish incorrect rewards for extra gain.
    \item $s_{\agg}^{\text{abort}}$: Exit early to avoid costs.
\end{itemize}

\emph{Participant strategy space $\mathcal{S}_{P_i}$:}
\begin{itemize}
    \item $s_i^{\text{honest}}$: Train, verify, and act accordingly.
    \item $s_i^{\text{malicious}}$: Submit false challenge to steal rewards.
    \item $s_i^{\text{passive}}$: Skip verification to avoid cost.
\end{itemize}

\paragraph{Utility Function.}
To facilitate formal analysis, we explicitly define all associated parameters for the utility functions introduced.
Each player's utility is determined by reward $R$, penalty $P$, and cost $C$, with binary indicators $I$ determining which terms apply.

\emph{Aggregator \agg:}
$$
u_{\agg} =
\begin{cases}
    R_{\agg}^{\text{model}} + R_{\agg}^{\text{bonus}} - C_{\agg}^{\text{commit}} & \text{if } s_{\agg} = s_{\agg}^{\text{honest}} \\
    R_{\agg}^{\text{model}} - C_{\agg}^{\text{commit}} - I_{\agg}^{\text{caught}} \cdot P_{\agg}^{\text{slash}} & \text{if } s_{\agg} = s_{\agg}^{\text{tamper}} \\
    - I_{\agg}^{\text{caught}} \cdot P_{\agg}^{\text{slash}} & \text{if } s_{\agg} = s_{\agg}^{\text{abort}}
\end{cases}
\,,
$$

\emph{Participant $P_i$:}
$$
u_{P_i} =
\begin{cases}
    R_i^{\text{reward}} + I_i^{\text{chal}} \cdot P_{\agg}^{\text{slash}} - C_i^{\text{gas}}  & \text{if } s_{P_i} = s_{P_i}^{\text{honest}} \\
    \begin{aligned}[t]
        & R_i^{\text{reward}} + (1 - I_i^{\text{fail}}) \cdot R_i^{\text{steal}} \\
        & \quad - I_i^{\text{fail}} \cdot P_i^{\text{slash}} - C_i^{\text{gas}}
    \end{aligned} & \text{if } s_{P_i} = s_{P_i}^{\text{malicious}} \\
    R_i^{\text{reward}} & \text{if } s_{P_i} = s_{P_i}^{\text{passive}}
\end{cases}
\,,
$$

\noindent \emph{where:}
\begin{itemize}
    \item $R_{\agg}^{\text{bonus}}$: reward for finalizing reward distribution.
    \item $R_{\agg}^{\text{model}}$: reward for \agg achieving the desired final model.
    \item $R_i^{\text{reward}}$: reward received by $P_i$ as incentive.
    \item $R_i^{\text{steal}}$: reward gained by $P_i$ through successful theft.
    \item $P_{\agg}^{\text{slash}}, P_i^{\text{slash}}$: penalty for caught misbehavior.
    \item $C_{\agg}^{\text{commit}}, C_{i}^{\text{gas}}$: cost for on-chain operations.
    \item $I_{\agg}^{\text{caught}}$: indicator for whether \agg tampering is challenged by at least one participant.
    \item $I_i^{\text{chal}}$: indicator for whether $P_i$ challenges upon detecting misbehavior.
    \item $I_i^{\text{fail}}$: indicator for whether $P_i$'s challenge fails.
\end{itemize}

\begin{proof}[Proof of Theorem 1]

    Here we provide the formal proof for Theorem 1 presented in Section \ref{sec:incentive}.
    We prove that the strategy profile
    $$
    \mathbf{s}^* = (s_{\agg}^{\text{honest}}, s_1^{\text{honest}}, \dots, s_N^{\text{honest}})
    $$
    constitutes a Nash equilibrium of the game $\mathcal{G} = (M, \{\mathcal{S}_i\}_{i \in M}, \{u_i\}_{i \in M})$ under rational behavior.

    \paragraph{Binary Indicator Instantiations.}
    In the strategy profile $\mathbf{s}^*$, we have:
    \begin{itemize}
        \item $I_{\agg}^{\text{caught}} = 1$, since all $P_i$ are honest (i.e., not passive).
        \item $I_i^{\text{chal}} = 1$, as $P_i$ verifies and challenges when appropriate.
        \item $I_i^{\text{fail}} = 1$, since \agg is honest and counters all false challenges.
    \end{itemize}

    \paragraph{Aggregator Deviation Analysis.}
    Assume all \pars follow $s_i^{\text{honest}}$.
    Then:

    \begin{itemize}
        \item Honest behavior yields:
            $$
            u_{\agg}^{\text{honest}} = R_{\agg}^{\text{model}} + R_{\agg}^{\text{bonus}} - C_{\agg}^{\text{commit}}.
            $$
        \item Tampering yields:
            $$
            u_{\agg}^{\text{tamper}} = R_{\agg}^{\text{model}} - C_{\agg}^{\text{commit}} - P_{\agg}^{\text{slash}}.
            $$
        \item Aborting yields:
            $$
            u_{\agg}^{\text{abort}} = - P_{\agg}^{\text{slash}}.
            $$
    \end{itemize}

    Since all parameters are positive, we have:
    $$
    u_{\agg}^{\text{honest}} > u_{\agg}^{\text{tamper}} \quad \text{and} \quad u_{\agg}^{\text{honest}} > u_{\agg}^{\text{abort}}.
    $$
    Thus, \agg has no incentive to deviate.

    \paragraph{Participant Deviation Analysis.}
    Fix $s_{\agg} = s_{\agg}^{\text{honest}}$ and $s_j = s_j^{\text{honest}}$ for all $j \ne i$.
    Consider $P_i$:

    \begin{itemize}
        \item Honest behavior yields:
            $$
            u_i^{\text{honest}} = R_i^{\text{reward}} + P_{\agg}^{\text{slash}} - C_i^{\text{gas}}.
            $$
        \item Malicious behavior yields:
            $$
            u_i^{\text{malicious}} = R_i^{\text{reward}} - P_i^{\text{slash}} - C_i^{\text{gas}}.
            $$
            Since all parameters are positive, we have:
            $$
            u_i^{\text{honest}} > u_i^{\text{malicious}}.
            $$
        \item Passive behavior yields:
            $$
            u_i^{\text{passive}} = R_i^{\text{reward}}.
            $$
            Honest behavior is strictly better if:
            $$
            u_i^{\text{honest}} > u_i^{\text{passive}} \implies P_{\agg}^{\text{slash}} > C_i^{\text{gas}}.
            $$
    \end{itemize}

    \paragraph{Conclusion.}
    No player has an incentive to deviate from $\mathbf{s}^*$ if the following condition holds:
    $$
    P_{\agg}^{\text{slash}} > C_i^{\text{gas}} \quad \text{for all } i \in M.
    $$
    Therefore, $\mathbf{s}^*$ is a Nash equilibrium of $\mathcal{G}$.
\end{proof}

\paragraph{Example.}
We illustrate the above analysis with a concrete parameter setting:

\begin{itemize}
    \item Aggregator: $R_{\agg}^{\text{model}} = 10$, $R_{\agg}^{\text{bonus}} = 4$, $C_{\agg}^{\text{commit}} = 1$, $P_{\agg}^{\text{slash}} = 3$.
    \item Participant $P_i$: $R_i^{\text{reward}} = 2$, $R_i^{\text{steal}} = 2$, $C_i^{\text{gas}} = 1$, $P_i^{\text{slash}} = 2$.
\end{itemize}

Under the strategy profile $\mathbf{s}^*$ where all players behave honestly, we compute:

\begin{itemize}
    \item Aggregator utility:
        $$
        \begin{aligned}[t]
            u_{\agg}^{\text{honest}} & = 10 + 4 - 1 = 13,
            \\
            u_{\agg}^{\text{tamper}} & = 10 - 1 - 3 = 6,
            \\
            u_{\agg}^{\text{abort}} & = -3.
        \end{aligned}
        $$
        The aggregator strictly prefers honest behavior.
    \item Participant utility:
        $$
        \begin{aligned}[t]
            u_i^{\text{honest}} & = 2 + 3 - 1 = 4,
            \\
            u_i^{\text{malicious}} & = 2 - 2 - 1 = -1,
            \\
            u_i^{\text{passive}} & = 2.
        \end{aligned}
        $$
        Honest behavior yields the highest utility for each $P_i$.
    \item The equilibrium condition $P_{\agg}^{\text{slash}} > C_i^{\text{gas}}$ is satisfied, as $3 > 1$.
\end{itemize}

Hence, no player has an incentive to deviate, and the strategy profile $\mathbf{s}^*$ constitutes a Nash equilibrium.

\end{document}